\newtheorem{theorem}{Theorem}
\newtheorem{claim}{Claim}
\newtheorem{proposition}{Proposition}
\newtheorem{definition}{Definition}
\newcommand{\E}{\mathbb{E}}
\newcommand{\Var}{\text{{\rm{Var}}}}
\newcommand{\Cov}{\text{{\rm{Cov}}}}
\def\State{{Z}}
\def\inter{\mathbf z}
\def\Inter{\mathbf Z}
\def\Dis{{D}}
\def\sign{\mathrm{sign}}
\def\up{\mathbf{d}}
\def\Up{\mathbf{D}}
\def\sort{\mathbf{s}}
\def\Sort{\mathbf{S}}
\def\R{\mathbb R}
\def\eps{\varepsilon}
\def\poly{\mathrm{poly}}
\providecommand\abs[1]{\lvert#1\rvert}
\providecommand\bigabs[1]{\bigl\lvert#1\bigr\rvert}
\DeclareMathOperator{\Ker}{Ker}
\title{Complete Classification \\ of Generalized Santha-Vazirani Sources}
\author{Salman Beigi\footnote{Institute for Research in Fundamental Sciences, Tehran.}
\and Andrej Bogdanov\footnote{Department of Computer Science and Engineering and Institute for Theoretical Science and Communications, Chinese University of Hong Kong.  Supported by HK RGC GRF grants CUHK14208215 and CUHK 14238716.}
\and Omid Etesami\footnote{Institute for Research in Fundamental Sciences, Tehran.}
\and Siyao Guo\footnote{Northeastern University. Part of the work are done while Siyao Guo was a research fellow at the Simons Institute for the Theory of Computing, UC Berkeley.}}
\date{}
\begin{document}

\maketitle

\begin{abstract}
Let $\mathcal{F}$ be a finite alphabet and $\mathcal{D}$ be a finite set of distributions over $\mathcal{F}$.  A Generalized Santha-Vazirani (GSV) source of type $(\mathcal{F}, \mathcal{D})$, introduced by Beigi, Etesami and Gohari (ICALP 2015, SICOMP 2017), is a random sequence $(F_1, \dots, F_n)$ in $\mathcal{F}^n$, where $F_i$ is a sample from some distribution $d \in \mathcal{D}$ whose choice may depend on $F_1, \dots, F_{i-1}$.

We show that all GSV source types $(\mathcal{F}, \mathcal{D})$ fall into one of three categories: (1) non-extractable; (2) extractable with error $n^{-\Theta(1)}$; (3) extractable with error $2^{-\Omega(n)}$.  This rules out other error rates like $1/\log n$ or $2^{-\sqrt{n}}$.

We provide essentially randomness-optimal extraction algorithms for extractable sources.  Our algorithm for category (2) sources extracts with error $\eps$ from $n = \poly(1/\eps)$ samples in time linear in $n$.  Our algorithm for category (3) sources extracts $m$ bits with error $\eps$ from $n = O(m + \log 1/\eps)$ samples in time $\min\{O(nm2^m),n^{O(\abs{\mathcal{F}})}\}$. 

We also give algorithms for classifying a GSV source type $(\mathcal{F}, \mathcal{D})$:  Membership in category (1) can be decided in $\mathrm{NP}$, while membership in category (3) is polynomial-time decidable.
\end{abstract}

\section{Introduction}

Randomness extractors turn a weak source of randomness into almost uniform independent random bits.  One of the first classes of distributions that were considered in the context of randomness extraction are Santha-Vazirani (SV) sources~\cite{SV86}, also called unpredictable-bit sources. An SV~source is a sequence of random bits such that every bit in the sequence has entropy bounded away from zero, even when conditioned on any possible sequence of previous bits.  As already pointed out in~\cite{SV86}, deterministic (seedless) extraction of even a single almost unbiased bit from SV sources is impossible, although these sources have entropy that grows linearly with their length.\footnote{With respect to seeded extraction, a constant seed length is sufficient for all SV sources~\cite{Vadhan12}.}  

In this work we consider deterministic extraction for a natural generalization of Santha-Vazirani sources which was introduced by Beigi, Etesami, and Gohari~\cite{BEG15,BEG17}.  A {\em generalized Santha-Vazirani (GSV) source} is specified by a pair $(\mathcal F, \mathcal D)$, where $\mathcal F$ is a finite set of {\em faces} and $\mathcal D$ is a finite set of {\em dice}, each of which is a probability distribution on $\mathcal F$. (We will assume that each face is assigned positive probability by at least one die.)  A distribution $(F_1, \dots, F_n)$, where the $F_i$s are $\mathcal F$-valued correlated random variables, is admissible by the source if it is generated by the following type of {\em strategy}:  For each $1 \leq i \leq n$, a die $d \in \mathcal D$ is chosen as a function of $F_1, \dots, F_{i-1}$ and $F_i$ is sampled according to the distribution $d$.  

The case $\abs{\mathcal D} = \abs{\mathcal F} = 2$ recovers the definition of SV sources.  In this instance, the dice are two-sided coins, one biased towards heads and the other one towards tails.  

We call a GSV source $(\mathcal F, \mathcal D)$ {\em extractable} with error $\eps$ from $n$ samples if there exists a function $\mathrm{Ext}\colon \mathcal{F}^n \to \{-1, 1\}$ such that for every distribution $(F_1, \dots, F_n)$ in the source, $\abs{\E[\mathrm{Ext}(F_1, \dots, F_n)]} \leq \eps$.  We call a source {\em extractable} if for every error $\eps > 0$ there exists a sample size $n$ for which the source is extractable with these parameters.

Beigi, Etesami and Gohari~\cite{BEG17} showed that randomness extraction from a GSV source is possible assuming the following condition:

\begin{definition}
A GSV source $(\mathcal{F}, \mathcal{D})$ satisfies the {\em Nonzero Kernel Positive Variance ($\mathrm{NK}^+$)} condition if there exists a function $\psi\colon \mathcal{F} \to [-1, 1]$ such that $\E_d[\psi(F)] = 0$ and $\Var_d[\psi(F)] > 0$ for every die $d \in \mathcal{D}$.
\end{definition}

Here, $\E_d$ and $\Var_d$ denote expectation and variance with respect to the distribution of die $d$.  On the other hand, they showed that extractability from such sources necessitates the following {\em Nonzero Kernel} (NK) condition:
\begin{quote}
There exists a nonzero $\psi\colon \mathcal{F} \to [-1, 1]$ such that $\E_d[\psi(F)] = 0$ for every die $d \in \mathcal{D}$.
\end{quote}
In particular, when all faces of all dice have positive probability (an assumption called ``nondegeneracy'' in~\cite{BEG17}), the ($\mathrm{NK}^+$) and (NK) conditions coincide, providing a characterization of extractability for this class of sources.  Their extractor requires $\Theta(1/\eps^3)$ samples to achieve error $\eps$.

There are, however, simple examples of GSV sources (\eqref{eqn:NK-HNK} and~\eqref{eqn:HNK-NKplus} below) that satisfy (NK) but not ($\mathrm{NK}^+$).  The work~\cite{BEG17} does not address the extractability of such sources.

In the setting of GSV sources, the existence of extractors does not appear to easily follow from counting arguments, as is the case of other types of sources for which extraction is known to be possible in principle and the focus is on efficient constructions, such as affine sources~\cite{Bourgain07,Gabizon11}, polynomial sources~\cite{DGW09,Dvir12} and independent blocks~\cite{Bourgain05,CZ16}.

\subsection*{Our Contributions}

Our first contribution is a complete characterization of extractability from GSV sources.  To motivate our result, we first observe that the (NK) condition is, in general, insufficient for extractability.  Consider, for instance the two-diced, three-faced GSV source described by the distributions $d_1 = (0, 0, 1)$ and $d_2 = (\tfrac12, \tfrac12, 0)$.  This source satisfies (NK) with the witness $\psi = (-1, 1, 0)$, but is clearly not extractable as the distribution in which $d_1$ is repeatedly tossed contains no entropy.

A slightly more interesting example is provided by the four-diced, three-faced GSV source 
\begin{equation}
\label{eqn:NK-HNK}
\tag{E1}
   d_1 = (\tfrac12, \tfrac12, 0, 0) 
   \quad d_2 = (0, 0, \tfrac13, \tfrac23) 
   \quad d_3 = (0, 0, \tfrac23, \tfrac13).
\end{equation}
This source also satisfies the (NK) condition (with $\psi = (-1, 1, 0, 0)$).  However, it is not extractable because it contains a ``hidden'' SV source (over two faces):  If die $d_1$ is tossed away and the first two faces are removed, dice $d_2$ and $d_3$ now fail the (NK) condition. 

These two examples suggest the following method for coming up with non-extractable GSV sources:  Start with any source that fails (NK), extend the dice with more faces of zero probability, and add any number of dice that assign positive probability to the new faces.  To describe such sources, we introduce the following natural strengthening of (NK):

\begin{definition}
A GSV source $(\mathcal F, \mathcal D)$ satisfies the {\em Hereditary Nonzero Kernel (HNK)} condition if for all subsets $\mathcal{D}' \subseteq \mathcal{D}$ there exists a nonzero {\em witness} $\psi: \mathcal{F}' \to [-1, 1]$ such that $\E_d[\psi(F)] = 0$ for all $d \in \mathcal{D}'$, where $\mathcal{F}'$ is the set of faces to which at least one die in $\mathcal{D}'$ assigns nonzero probability.
\end{definition}

Clearly (HNK) is a necessary condition for extractability, because if $(\mathcal F, \mathcal D)$ fails (HNK) then $(\mathcal F', \mathcal D')$ fails (NK).  Our first theorem shows that (HNK) is also sufficient.  Moreover, it gives a universal upper bound on the number of samples:

\begin{theorem}
\label{thm:main1}
The following conditions are equivalent for a GSV source $(\mathcal F, \mathcal D)$: 
\begin{enumerate}  
\setlength\itemsep{0pt}
\item $(\mathcal F, \mathcal D)$ satisfies HNK.
\item $(\mathcal F, \mathcal D)$ is extractable.
\item For every $\eps$, $(\mathcal F, \mathcal D)$ is extractable with error $\eps$ from $n = \poly(1/\eps)$ samples in time linear in $n$.
\end{enumerate}
\end{theorem}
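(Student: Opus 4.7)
The implication (3) $\Rightarrow$ (2) is immediate from the definition. For (2) $\Rightarrow$ (1) I would argue the contrapositive: if HNK fails via some subset $\mathcal{D}' \subseteq \mathcal{D}$, then the sub-source $(\mathcal{F}', \mathcal{D}')$ fails (NK), so by the impossibility result of~\cite{BEG17} it is not extractable. An adversary for $(\mathcal{F}, \mathcal{D})$ can simulate the worst-case strategy for $(\mathcal{F}', \mathcal{D}')$ by restricting its dice choices to $\mathcal{D}'$, so extraction is impossible at the full level as well.

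The bulk of the theorem is (1) $\Rightarrow$ (3), which I would prove by induction on $|\mathcal{F}|$. The base case reduces to the $\mathrm{NK}^+$ setting handled in~\cite{BEG17} at cost $O(1/\eps^2)$ samples (when $|\mathcal{F}|$ is small enough that the set $\mathcal{F}_0$ below is empty, the HNK witness necessarily has positive variance on every die). For the inductive step, let $\psi: \mathcal{F} \to [-1, 1]$ be the HNK witness for $\mathcal{D}' = \mathcal{D}$ and let $\mathcal{F}_0 = \mathcal{F} \cap \psi^{-1}(0)$; this is a proper subset of $\mathcal{F}$ since $\psi \neq 0$ and every face lies in some die's support. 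Call a die \emph{active} if $\Var_d[\psi] > 0$ and \emph{degenerate} if $\text{supp}(d) \subseteq \mathcal{F}_0$. The effective sub-source of samples landing in $\mathcal{F}_0$ lives over $(\mathcal{F}_0, \mathcal{D}_0^*)$ where $\mathcal{D}_0^* = \{d|_{\mathcal{F}_0} : d \in \mathcal{D}, d(\mathcal{F}_0) > 0\}$; assuming HNK holds for this smaller-alphabet source (discussed below), induction gives an efficient extractor $\text{Ext}_0$ for it.

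The extractor I propose, on input $(F_1, \dots, F_n)$, outputs $\text{sign}(S_n) \cdot \text{Ext}_0(G)$, where $S_n = \sum_{i=1}^n \psi(F_i)$ and $G$ is the subsequence of $F_i$ with $\psi(F_i) = 0$ (with appropriate conventions when $S_n = 0$ or $G$ is short). The sign bit is analyzed by the martingale/CLT argument of~\cite{BEG17}: the conditional variance of $S_n$ is at least $\sigma_{\min}^2 T$ where $T$ counts active steps, yielding $|\E[\text{sign}(S_n)]| = O(1/(\sigma_{\min}\sqrt{T}))$. Since $|G| \geq n - T$, whenever $T \geq n/2$ the sign bit is near-unbiased, and whenever $T < n/2$ the recursive extractor has enough samples to be near-unbiased by induction. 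Unfolding the resulting recurrence $n_\mathcal{F}(\eps) = n_{\mathcal{F}_0}(O(\eps)) + O(1/(\sigma_{\min}^2 \eps^2))$ yields $n = \poly(1/\eps)$ with linear-time evaluation.

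The main obstacle is verifying the two claims underlying the combination. First, HNK for $(\mathcal{F}_0, \mathcal{D}_0^*)$: given $\mathcal{D}'' \subseteq \mathcal{D}_0^*$, lift to a preimage set $\mathcal{D}''' \subseteq \mathcal{D}$; the existence of a witness for $\mathcal{D}''$ on $\text{supp}(\mathcal{D}'') \subseteq \mathcal{F}_0$ amounts to finding a nonzero element of the kernel of $\mathcal{D}'''$ that vanishes on $\mathcal{F} \setminus \mathcal{F}_0$, which I would establish by a dimension-counting argument combined with repeated applications of HNK at the full level to subsets of $\mathcal{D}$ constructed by ``pinning'' values outside $\mathcal{F}_0$. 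Second, bounding the bias of the product $\text{sign}(S_n) \cdot \text{Ext}_0(G)$ requires near-independence of the two factors; I would obtain this by conditioning on the zero-pattern of $(\psi(F_i))_i$ and invoking the martingale structure of $S_n$ conditional on this pattern, so that the product's bias is bounded by the smaller of the two individual biases.
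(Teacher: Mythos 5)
Your treatment of $(3)\Rightarrow(2)$ and $(2)\Rightarrow(1)$ is fine (the latter is exactly the route mentioned after the paper's proof of Theorem~\ref{thm:main1}, via Theorem~6 of \cite{BEG17} applied to the sub-source that fails (NK)). The gap is in $(1)\Rightarrow(3)$: the structural claim your recursion rests on --- that the conditioned sub-source $(\mathcal F_0,\mathcal D_0^*)$ on $\mathcal F_0=\psi^{-1}(0)$ again satisfies HNK --- is false, and the paper's own example \eqref{eqn:HNK-NKplus} witnesses this. For that source, $d_1$ forces any full-kernel witness $\psi=(a,b,c,e)$ to satisfy $b=-a$, while $d_2$ and $d_3$ then force $c+e=-a/2$ and $c+e=a/2$ respectively, so $a=b=0$ and, up to scaling, $\psi=(0,0,1,-1)$. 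Hence $\mathcal F_0=\{1,2\}$, and conditioning $d_1,d_2,d_3$ on $\mathcal F_0$ yields the dice $(\tfrac12,\tfrac12)$, $(\tfrac34,\tfrac14)$, $(\tfrac14,\tfrac34)$, which jointly fail even (NK): a fair coin together with two oppositely biased coins has trivial kernel. So the sub-source you recurse on is not extractable at all, no dimension-counting or ``pinning'' argument can rescue the claim, and the branch $T<n/2$ of your analysis has no induction hypothesis to invoke. Your second combining claim is also unsubstantiated: for $\sign(S_n)\cdot\mathrm{Ext}_0(G)$ to be nearly unbiased you need $\sign(S_n)$ to be nearly unbiased \emph{conditioned on} $\mathrm{Ext}_0(G)$, and an adaptive adversary can correlate the two factors (for instance by choosing how to steer the later $\mathcal F_0$-samples as a function of the sign of the current partial sum), so ``the product's bias is bounded by the smaller of the two biases'' does not follow from the unconditional martingale property of $S_n$.

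The paper avoids this by never recursing on faces. Proposition~\ref{prop:char}, by induction on the number of \emph{dice}, combines the HNK witnesses for subsets of dice into a single function $\phi=\psi+(v\eps/8)\psi'$ satisfying the quantitative MVR condition $\abs{\E_d[\phi(F)]}<\eps\Var_d[\phi(F)]$ together with $\Var_d[\phi(F)]\geq\eps^C$ for every die; Proposition~\ref{prop:sufficient} then runs a single \cite{BEG17}-style threshold extractor on $\phi$, redoing the variance and stopping analysis so that it tolerates the small nonzero drift, which gives error $O(\sqrt\eps)$ from $\poly(1/\eps)$ samples in time linear in $n$; the converse is the quantitative refinement in Proposition~\ref{prop:necessary}. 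If you want to salvage your decomposition, you would have to exploit that conditioned $\mathcal F_0$-samples from active dice are not free for the adversary --- each one comes with variance injected into $S_n$ --- i.e.\ you need a coupled analysis of the two parts rather than two independent black boxes, and that coupling is essentially what the MVR-based argument accomplishes.
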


In the course of proving Theorem~\ref{thm:main1} we introduce the analytic {\em Mean Variance Ratio (MVR)} condition that turns out to be equivalent to HNK (Proposition~\ref{prop:char}).  We show that a quantitative variant of the MVR condition  determines the best-possible quality of extraction, up to a quadratic gap, even for GSV sources that are not extractable to within arbitrary small error (Propositions~\ref{prop:sufficient} and~\ref{prop:necessary}).

It is natural to ask if $\poly(1/\eps)$ samples are in general necessary for the extractor in part 3 of Theorem~\ref{thm:main1}.  Our second result shows not only that this is the case, but completely characterizes GSV sources that are extractable in a randomness-efficient manner.

\begin{theorem}
\label{thm:main2}
The following conditions are equivalent for a GSV source $(\mathcal F, \mathcal D)$: 
\begin{enumerate}
\setlength\itemsep{0pt}
\item $(\mathcal F, \mathcal D)$ satisfies $\mathit{NK}^+$.
\item For every $\eps$, $(\mathcal F, \mathcal D)$ is extractable with error $\eps$ from $o(1/\eps^2)$ samples.
\item For every $\eps$ and $m$, $(\mathcal F, \mathcal D)$ is extractable with error\footnote{The error of an extractor that outputs multiple bits is the statistical (total variation) distance between its output distribution and the uniform distribution.} $\eps$ and output length $m$ from $n = O(\log(1/\eps) + m)$ samples in time $\min\{O(nm2^m),n^{O(\abs{\mathcal{F}})}\}$.
\end{enumerate}
\end{theorem}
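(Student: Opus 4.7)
\medskip
\noindent\textit{Proof plan.}
The plan is to establish the cycle $(3) \Rightarrow (2) \Rightarrow (1) \Rightarrow (3)$. The implication $(3) \Rightarrow (2)$ is immediate: specialize to $m = 1$ in part~(3) and note that $O(\log(1/\eps) + 1) = o(1/\eps^2)$. For the nontrivial directions, I would use the characteristic-function (Fourier) behavior of the martingale $S_n = \sum_{i=1}^n \psi(F_i)$ built from a kernel witness $\psi$.

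For $(1) \Rightarrow (3)$, fix an $NK^+$ witness $\psi \colon \mathcal{F} \to [-1,1]$, so $\E_d[\psi]=0$ and $\Var_d[\psi] \ge \sigma^2 > 0$ for some $\sigma>0$ and every die $d$. The key quantitative fact is that, under any adversarial strategy, the conditional characteristic function of each increment satisfies $|\E[e^{i\xi\psi(F_i)}\mid F_{<i}]| \le 1 - c\xi^2$ for some constant $c>0$ and all $\xi$ in a compact range bounded away from the ``periods'' of the individual dice (one obtains this from Taylor-expanding around $\xi=0$ and uniformly bounding away from $1$ on the rest of the relevant range using positive variance). Multiplying over $n$ steps gives $|\E[e^{i\xi S_n}]| \le e^{-\Omega(n)}$ uniformly over strategies. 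The extractor outputs a bucket label of $S_n \bmod L$ for an appropriately chosen $L$, so that by Poisson summation the total variation distance from uniform on $2^m$ buckets is bounded by $\sum_{k\ne 0}|\E[e^{2\pi i k S_n/L}]| = O(2^m\,e^{-\Omega(n)})$. Choosing $n = O(m + \log(1/\eps))$ drives this below $\eps$. For the running time, $S_n$ is computed in $O(n)$ and the bucket map in $O(m2^m)$, giving the first bound; the alternative $n^{O(|\mathcal{F}|)}$ bound comes from a dynamic program over the empirical face-type vector, of which there are at most $n^{|\mathcal{F}|}$ values.

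For $(2) \Rightarrow (1)$ I would argue contrapositively. If $NK^+$ fails then for every nonzero $\psi$ with $\E_d[\psi]=0$ for all $d$ there is a die $d^*=d^*(\psi)$ on whose support $\psi$ is constant, so the adversary can play $d^*$ to ``freeze'' the relevant sufficient statistic. Using the quantitative MVR framework of Propositions~\ref{prop:sufficient} and~\ref{prop:necessary}, this rules out extractor bias better than $\Omega(1/\sqrt{n})$: the mean-variance ratio optimized over kernel witnesses is infinite in the absence of $NK^+$, and the MVR necessary bound then translates to a $1/\sqrt{n}$ floor on achievable bias, which forces $n = \Omega(1/\eps^2)$.

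I expect the main obstacle to be the lower bound $(2) \Rightarrow (1)$: the CLT-type reasoning controls $S_n$ directly, but the conclusion must hold against \emph{every} extractor, not just those of the form $\mathrm{sign}(S_n)$. The fix is to exploit the MVR duality established earlier to argue that any extractor's bias is captured, up to a $1/\sqrt{n}$-order correction, by a linear functional in the kernel directions; combined with the adversarial use of the zero-variance die $d^*(\psi)$, this yields the desired $\Omega(1/\sqrt{n})$ bias lower bound.
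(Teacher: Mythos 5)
Your implication $(3)\Rightarrow(2)$ is fine, but both nontrivial directions have genuine gaps. For $(1)\Rightarrow(3)$, the crux of your argument is the claim that ``multiplying over $n$ steps gives $\abs{\E[e^{i\xi S_n}]}\le e^{-\Omega(n)}$ uniformly over strategies.'' This product bound is not valid for \emph{adaptive} sources: conditioning on $F_1,\dots,F_{n-1}$ gives $\E[e^{i\xi S_n}]=\E\bigl[e^{i\xi S_{n-1}}\,g_n\bigr]$ with $\abs{g_n}\le 1-c\xi^2$ pointwise, and since $\abs{e^{i\xi S_{n-1}}}=1$ this yields only a single factor $1-c\xi^2$; you cannot iterate, because $g_n$ is a function of the same history as $S_{n-1}$, and the adversary can pick the die at each step so that the \emph{phase} of its characteristic function re-aligns the conditional terms, preventing geometric decay of the modulus (the bound $(1-c\xi^2)^n$ is correct only for non-adaptive, i.e.\ independent, die choices). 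There is also the lattice problem you mention only in passing: a single fair-coin die with $\psi=\pm1$ fixes the parity of $S_n$, and with several dice an adaptive adversary can try to steer $S_n \bmod L$, so the mod-$L$ bucketing of a fixed linear statistic is exactly the kind of ``predetermined-time, fixed-statistic'' extractor the paper warns can incur constant error. This is why the paper's proof of $(1)\Rightarrow(3)$ abandons $S_n$ entirely and uses the self-normalizing update $Z_{t+1}=Z_t+\frac{\psi(F_t)}{2}(1-\abs{Z_t})$ (Proposition~\ref{prop:expsmall}) and its $2^m$-coordinate extension $\Inter_t$ (Proposition~\ref{prop:main2}): there the logarithmic potential $\ln(1/(1-\abs{Z_t}))$ (resp.\ $\ln(1/\Inter_t[j])$) is a submartingale with bounded differences and constant per-step drift \emph{for every adversarial strategy}, so Azuma gives the $2^{-\Omega(n)}$ error; adaptivity is handled structurally rather than by a Fourier product.

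For $(2)\Rightarrow(1)$, your appeal to ``the quantitative MVR framework'' cannot deliver a sample-complexity bound: Proposition~\ref{prop:necessary} is independent of $n$ (failure of $\mathrm{MVR}(\eps)$ gives an error floor $\eps/10$ for \emph{every} $n$), so it has no $1/\sqrt{n}$ content, and the ``linear functional capturing any extractor's bias up to a $1/\sqrt n$ correction'' that you invoke is precisely the missing lemma, not something you can cite. The paper's argument needs two extra ingredients: (i) a duality step (Claim~\ref{claim:nkdual}, Proposition~\ref{prop:mvd}) showing that if $\mathrm{NK}^+$ fails then the refined condition $\mathrm{MVD}(\eps,\delta)$ — which demands $\abs{\E_d[\psi(F)]}<\eps(\Var_d[\psi(F)]-\delta)$ — already fails with $\delta=C\eps^2$, because on the bad die $d$ the spread $\psi(f^*)-\psi(f_*)$ is a fixed linear combination of the die means; and (ii) an induction over samples with an explicit $-\delta n$ loss per step (Claim~\ref{claim:ext-induction}, Proposition~\ref{prop:lowerquality}), valid against \emph{arbitrary} extractors, which converts the variance deficit $\delta=\Theta(\eps^2)$ into the $\Omega(1/\eps^2)$ sample lower bound. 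Your contrapositive is aimed in the right direction (freeze the source with the constant-on-support die and get $n=\Omega(1/\eps^2)$), but without an analogue of the $\mathrm{MVD}$ condition and the $\delta n$-corrected induction the step from ``$\mathrm{NK}^+$ fails'' to a bound that degrades with $n$ is unproved.
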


The sample complexity of the extractor in part 3 of Theorem~\ref{thm:main2} is optimal up to the leading constant:  $\Omega(m)$ samples are necessary by entropy considerations, and $\Omega(1/\eps)$ samples are necessary for non-trivial sources\footnote{The exception consists of one-die GSV sources that admit an event of probability exactly half, for which errorless extraction is possible.} by granularity considerations.

Condition $\mathrm{NK}^+$ is strictly stronger than condition HNK.  For example, the source
\begin{equation}
\label{eqn:HNK-NKplus}
\tag{E2}
d_1 = (\tfrac12, \tfrac12, 0, 0), \quad
d_2= (\tfrac14, \tfrac{1}{12}, \tfrac13, \tfrac13),\quad
d_3= (\tfrac{1}{12}, \tfrac14, \tfrac13, \tfrac13).
\end{equation}
satisfies HNK but not $\mathrm{NK}^+$.

Taken together, Theorems~\ref{thm:main1} and~\ref{thm:main2} completely classify non-trivial GSV sources into three categories:  (1) non-extractable, (2) extractable with error $n^{-\Theta(1)}$, and (3) extractable with error $2^{-\Omega(n)}$, where $n$ is the number of samples.  This rules out the existence of GSV sources of other error rates like $1/\log n$ or $2^{-\sqrt{n}}$.

Moreover, sources can be classified algorithmically: Condition HNK can be decided by a $\mathrm{coNP}$ algorithm, while $\mathrm{NK}^+$ is polynomial-time decidable (see Proposition~\ref{prop:nkequiv}).

Figure~\ref{fig:roadmap} indicates the relations between the different conditions for extractability of GSV sources uncovered in this work.

\tikzset{state/.style = {rectangle, rounded corners, draw=black, minimum height=2em, inner sep=2pt, text centered},
         forall/.style = {draw=black, fill=white, minimum height=1.5em, inner sep=0pt, regular polygon, regular polygon sides=3, shape border rotate=180, anchor=east},
         separation/.style = {decorate, decoration={zigzag, amplitude=1.5pt}}}

\vspace{\baselineskip}
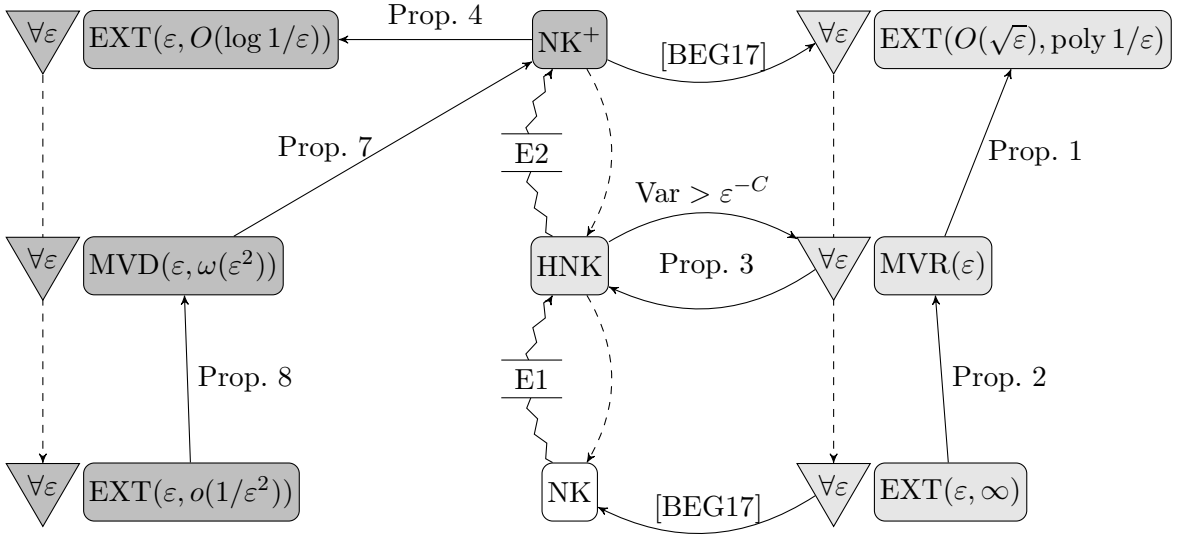
\begin{figure}[h!]
\begin{center}
\begin{tikzpicture}[->,>=stealth']

\node[state, fill=black!25, anchor=west] (EXTlog) at (-1.4, 6) {$\mathrm{EXT}(\eps, O(\log 1/\eps))$};
\node[state, fill=black!25, anchor=west] (MVD) at (-1.4, 3) {$\mathrm{MVD}(\eps, \omega(\eps^2))$};
\node[state, fill=black!25, anchor=west] (EXTsubq) at (-1.4, 0) {$\mathrm{EXT}(\eps, o(1/\eps^2))$};

\node[forall, fill=black!25] (AEXTlog) at (-1.6, 6.1) {$\forall \eps$};
\node[forall, fill=black!25] (AEXTsubq) at (-1.6, 0.1) {$\forall \eps$};
\path (AEXTlog) edge[dashed, ->] (AEXTsubq);
\node[forall, fill=black!25] (AMVD) at (-1.6, 3.1) {$\forall \eps$};

\node[state] (NK) at (5, 0) {NK};
\node[state, fill=black!10] (HNK) at (5, 3) {HNK};
\node[state, fill=black!25] (NKP) at (5, 6) {$\mathrm{NK}^+$};

\node[state, fill=black!10, anchor=west] (EXT) at (9, 0) {$\mathrm{EXT}(\eps, \infty)$};
\node[state, fill=black!10, anchor=west] (MVR) at (9, 3) {$\mathrm{MVR}(\eps)$};
\node[state, fill=black!10, anchor=west] (EXTpoly) at (9, 6) {$\mathrm{EXT}(O(\sqrt\eps), \poly\, 1/\eps)$};

\node[forall, fill=black!10] (AEXT) at (8.8, 0.1) {$\forall \eps$};
\node[forall, fill=black!10] (AEXTpoly) at (8.8, 6.1) {$\forall \eps$};
\path (AEXTpoly) edge[dashed, ->] (AEXT);
\node[forall, fill=black!10] (AMVR) at (8.8, 3.1) {$\forall \eps$};

\path (NKP) edge[->] node[above] {Prop.~\ref{prop:expsmall}} (EXTlog);

\path (NKP) edge[bend left, dashed, ->] (HNK); 
\path (HNK) edge[bend left, dashed, ->] (NK); 

\path (AMVR) edge[bend left, ->] (HNK);
\path (HNK) edge[bend left, ->] node[above] {$\Var > \eps^{-C}$} (AMVR);

\path (MVR) edge[->] node[right] {Prop.~\ref{prop:sufficient}} (EXTpoly);
\path (EXT) edge[->] node[right] {Prop.~\ref{prop:necessary}} (MVR);

\path (AEXT) edge[bend left, ->] node[above] {\cite{BEG17}} (NK);
\path (NKP) edge[bend right, ->] node[above] {\cite{BEG17}} (AEXTpoly);

\node (C) at (6.8, 3) {Prop.~\ref{prop:char}};

\path (MVD) edge[->] node[left] {Prop.~\ref{prop:mvd}} (NKP);

\path (EXTsubq) edge[->] node[right] {Prop.~\ref{prop:lowerquality}} (MVD);

\path (NK) edge[separation, bend left] (HNK);
\fill [white] (4.1, 1.25) rectangle (4.9, 1.75);
\draw [-] (4.1, 1.25) -- (4.9, 1.25);
\draw [-] (4.1, 1.75) -- (4.9, 1.75);
\node (E1) at (4.5, 1.5) {E1};

\path (HNK) edge[separation, bend left] (NKP);
\fill [white] (4.1, 4.25) rectangle (4.9, 4.75);
\draw [-] (4.1, 4.25) -- (4.9, 4.25);
\draw [-] (4.1, 4.75) -- (4.9, 4.75);
\node (E2) at (4.5, 4.5) {E2};
\end{tikzpicture}
\end{center}
\vspace{-\baselineskip}
\caption{A map of our results.  Straight arrows are implications (the dashed ones are immediate) and wiggly arrows are separations.  $\mathrm{EXT}(\eps, n)$ postulates extractability with error $\eps$ from $n$ samples.  Lightly and darkly shaded boxes represent equivalent conditions for extractability and randomness-efficient extractability, respectively.}
\label{fig:roadmap}
\end{figure}

\newpage
\subsection*{Proof Techniques}

Our proofs rely on a combination of probabilistic, algorithmic, and analytic methods.

\paragraph{Feasibility of extraction} 
The extractor of~\cite{BEG17} outputs the sign of $Z_T = \psi(F_1) + \dots + \psi(F_T)$ at the earliest time $T$ when $\abs{Z_T}$ exceeds some pre-specified threshold $M$.  Here, $\psi$ is the witness for condition ($\mathrm{NK}^+$), which ensures that $\E[\psi(F)]$ is always zero and $\Var[\psi(F)]$ is always positive.  Therefore $(Z_t)$ is a martingale with growing variance, and the analysis of~\cite{BEG17} shows that the process terminates by time $n = O(1/\eps^3)$ except with probability $\eps/2$
when $M$ is chosen as $\Theta(1/\eps)$.  Moreover, $Z_T$ must take value in the range $(-(M + 1), -M] \cup [M, M + 1)$, so by the optional stopping time theorem, the bias of $Z_T$ is $\eps/2$ when $M = \Theta(1/\eps)$.

In case only the weaker (HNK) condition holds, $\Var[\psi(F)]$ could be zero for some dice and the value of $Z_t$ may remain constant throughout the process.  On the other hand, (HNK) provides not one but many witnesses $\psi$, one for every subset of the dice.  Proposition~\ref{prop:char} shows how all these witnesses can be combined into a single $\phi\colon \mathcal{F} \to [-1, 1]$ that has positive variance with respect to all the dice, but may have nonzero expectation.  By a careful implementation of this strategy, it is ensured that the ratio $\abs{\E_d[\phi(F)]}/\Var_d[\phi(F)]$ can be made smaller than any pre-specified $\eps > 0$.  This is our Mean Variance Ratio (MVR) condition.  Moreover, $\Var_d[\phi(F)]$ can be lower bounded by $\eps^C$ for some constant $C$ that depends only on the GSV source.

To prove Theorem~\ref{thm:main1} we apply the extractor of~\cite{BEG17} to the function $\phi$.  As $\phi$ may be biased with respect to some dice, $(Z_t)$ may no longer be a martingale, rendering the optional stopping time theorem inapplicable.  In Proposition~\ref{prop:sufficient} we demonstrate that the conclusion of the~\cite{BEG17} analysis still applies in our context.  Intuitively, the (MVR) condition should imply that the 
variance of $Z_t$ grows, and does so at a faster rate than the magnitude of its expectation.  Therefore the stopping time should still be finite, and the component of extraction error incurred by $\abs{\E[Z_T]}$ should be small.  Owing to dependencies between the various steps, a rigorous implementation of these ideas requires substantial care.

\paragraph{Quality and quantity of extracted bits}
For GSV sources that satisfy ($\mathrm{NK}^+$) the extractor of~\cite{BEG17} inherently requires $\Omega(1/\eps)$ samples: On the one hand, to ensure termination with high probability the boundary threshold $M$ can be at most $n$, but on the other hand $Z_T$ may fall anywhere in the range $(-(M + 1), -M] \cup [M, M + 1)$, thereby incurring an error of $\eps = \Omega(1/M)$.\footnote{A tempting alternative is for the ``extractor'' to simply output the sign of $Z_n$ after looking at some predetermined number of samples.  However, this ``extractor'' may incur error $\Omega(1)$ for almost any GSV source.}  To improve the sample complexity, our bit extractor in Theorem~\ref{thm:main2} applies the update rule
\[ Z_{t+1} = Z_t + \frac{\psi(F_t)}{2} \cdot (1 - \abs{Z_t}) \]
and outputs the sign of $Z_n$ for $n = O(\log 1/\eps)$.  Under ($\mathrm{NK}^+$) the sequence $(Z_t)$ is still a martingale, but now the range of $Z_t$ is restricted to the open interval $(-1, 1)$.   On average, the deviation of the step size $Z_{t+1} - Z_t$ conditioned on $Z_t$ is smaller the closer $Z_t$ is to one of the boundary points $\{-1, 1\}$.  We show that the logarithm of $1/(1 - \abs{Z_t})$ grows by a constant on average in every step and apply Azuma's inequality to conclude that $Z_n$ is within $2^{-\Omega(n)}$ of $1$ or $-1$ with probability $1 - 2^{-\Omega(n)}$.  This ensures the bias of the output is inverse exponential in the number of samples.

To extract multiple bits, the state $\Inter_t$ of the above process is extended to encode a probability distribution over $\{0, 1\}^m$.  Initially $\Inter_0$ is the uniform distribution. The distance measure $1 - \abs{Z_t}$ is replaced by a carefully chosen quantity $\Up_{t} \in \R^{2^m}$ which ensures that $\Inter_t$ is a probability distribution that rapidly concentrates on a single entry in $\{0, 1\}^m$, which is the output of the extractor.  Since ($\Inter_t$) is a multi-dimensional martingale, the output must be statistically close to uniform.

\paragraph{Lower bounds} 
Beigi, Etesami, and Gohari~\cite{BEG17} proved that if a source fails the (NK) condition, namely if for all $\psi$ there exists a die $d$ for which $\abs{\E_d[\psi(F)]}/\Var_d[\psi(F)] = \Omega(1)$, then it is not extractable.  In Proposition~\ref{prop:necessary} we prove a quantitatively precise refinement of this statement:  If $\abs{\E_d[\psi(F)]}/\Var_d[\psi(F)] \geq \eps$, then the extraction error must be at least $\Omega(\eps)$.  We conclude that extractability implies the (MVR) condition, which together with a compactness argument (see Proposition~\ref{prop:char}) gives (HNK), proving the ``only if'' direction of Theorem~\ref{thm:main1}.

While this consequence was already established in~\cite{BEG17} by other, combinatorial methods, we obtain a further refinement that is used to prove the ``only if'' direction of Theorem~\ref{thm:main2}.  In Section~\ref{sec:polyexample} we introduce the mean-variance divergence (MVD) condition, which postulates that 
$\abs{\E_d[\psi(F)]} < \eps(\Var_d[\psi(F)] - \delta)$ for all dice. In Proposition~\ref{prop:lowerquality} we show that if MVD fails then extraction with error $\eps$ requires $\Omega(1/\delta)$ samples.  In Proposition~\ref{prop:mvd} we use linear-algebraic duality to show that if ($\mathrm{NK}^+$) fails then so does (MVD) with $\delta = O(\eps^2)$, thereby completing the proof of Theorem~\ref{thm:main2}.

\section{A characterization of extractable GSV sources}
\label{sec:general}

In this Section we prove Theorem~\ref{thm:main1}.  The following analytic condition plays a central role in the proof:

\begin{definition}
A GSV source $(\mathcal{F}, \mathcal{D})$ satisfies the {\em Mean-Variance Ratio condition} with parameter $\eps > 0$ ($MVR(\eps)$) if  there exists a function $\psi:\mathcal F\rightarrow [-1, 1]$ such that for every die $d \in \mathcal{D}$ of a GSV source $(\mathcal F, \mathcal D)$,
\begin{align}
\label{eq:E-V-epsilon}
\tag{MVR}
\big|\E_d[\psi(F)]\big| < \epsilon \Var_d[\psi(F)].
\end{align}
\end{definition}

Proposition~\ref{prop:sufficient} in Section~\ref{sec:feasibility} shows that if a GSV source satisfies $\mathrm{MVR}(\eps)$ then it is extractable with error $O(\sqrt{\eps})$ from $\poly(1/\eps)$ samples.  On the other hand, Proposition~\ref{prop:necessary} in Section~\ref{sec:lowerbound} shows that any GSV source that is extractable with error less than $\eps/10$ (from any number of samples) satisfies $\mathrm{MVR}(\eps)$.  Thus the smallest $\eps$ for which $\mathrm{MVR}(\eps)$ holds measures the best-possible quality of extraction of a GSV source to within a square.

In the case when $\mathrm{MVR}(\eps)$ holds for all $\eps > 0$, the source is extractable.  Surprisingly, proposition~\ref{prop:char} shows that $\forall \eps \mathrm{MVR}(\eps)$ implies HNK.  HNK, in turn, implies a slightly stronger form of $\forall \eps \mathrm{MVR}(\eps)$, which together with Proposition~\ref{prop:sufficient} establishes the extractability of HNK sources from $\eps^{-C}$ samples, where $C$ is a constant that depends only on the source.

\subsection{Feasibility of extraction}
\label{sec:feasibility}

\begin{proposition}
\label{prop:sufficient}
If GSV source $(\mathcal F, \mathcal D)$ satisfies $MVR(\eps)$, then it is extractable from $n$ samples with error at most $3\sqrt{\eps} + 4/\eps v n + O(\eps)$, where $v$ is the minimum of $\Var_d[\psi(F)]$ over all $d \in \mathcal D$.
\end{proposition}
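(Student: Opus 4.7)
The plan is to adapt the threshold-stopping extractor of~\cite{BEG17} to the MVR setting. Let $\psi$ be the $\mathrm{MVR}(\eps)$ witness, form the walk $Z_t = \sum_{i=1}^t \psi(F_i)$, and define the stopping time $\tau = \min\{t \le n : |Z_t| \ge M\}$ (with $\tau = n$ if the threshold is never reached), for a threshold $M = 1/\sqrt{\eps}$. The extractor outputs $\sign(Z_\tau)$. Unlike in~\cite{BEG17}, where $(Z_t)$ was a martingale, here $\psi$ may have nonzero conditional mean, so I decompose $Z_t = W_t + D_t$ into its Doob martingale $W_t$ and accumulated drift $D_t = \sum_{i\le t}\E[\psi(F_i) \mid F_{<i}]$; condition MVR gives $|D_t| \le \eps V_t$, where $V_t = \sum_{i\le t} \Var[\psi(F_i) \mid F_{<i}]$ is the quadratic variation of $W_t$.

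For termination, I write $\mu_t = \E[\psi(F_{t+1}) \mid F_{\le t}]$ and $\sigma_t^2 = \Var[\psi(F_{t+1}) \mid F_{\le t}]$ and observe
\[
\E\bigl[Z_{t+1}^2 - Z_t^2 \,\big|\, F_{\le t}\bigr] = \sigma_t^2 + \mu_t^2 + 2 Z_t \mu_t \ge \sigma_t^2\bigl(1 - 2\eps |Z_t|\bigr) \ge v\,(1 - 2\sqrt{\eps})
\]
while $|Z_t| \le M$. Optional stopping on the compensated process, together with the deterministic bound $|Z_\tau| \le M+1$ (the increments of $Z$ are at most $1$), yields $v(1-2\sqrt{\eps})\,\E[\tau] \le \E[Z_\tau^2] \le (M+1)^2$, so by Markov's inequality $\Pr[\tau = n] \le (M+1)^2/\bigl(v(1-2\sqrt{\eps})\,n\bigr)$; this accounts for the $4/(\eps v n)$ term after substituting $M = 1/\sqrt{\eps}$.

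To control the bias, optional stopping on $W_t$ gives $\E[Z_\tau] = \E[D_\tau]$, hence $|\E[Z_\tau]| \le \eps\,\E[V_\tau]$. The step I expect to be the main obstacle is bounding $\E[V_\tau]$, because $D_\tau$ and $V_\tau$ enter each other's estimates circularly. I will apply optional stopping to $W_t^2 - V_t$ to get $\E[V_\tau] = \E[W_\tau^2]$, expand $W_\tau = Z_\tau - D_\tau$, and use $|Z_\tau|\le M+1$, $|D_\tau|\le \eps V_\tau$, and the trivial $V_\tau \le n$ to derive the self-referential inequality
\[
\E[V_\tau] \le 2(M+1)^2 + 2\eps^2 n\,\E[V_\tau].
\]
Provided $n \le 1/(4\eps^2)$, this yields $\E[V_\tau] \le 4(M+1)^2$, so $|\E[Z_\tau]|/M \le 4\eps(M+1)^2/M = O(\sqrt{\eps})$; for larger $n$ one truncates to $n' = 1/(4\eps^2)$ samples, since in that regime the $4/(\eps v n)$ term is already absorbed into $O(\eps)$.

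Finally, I compare $\sign(Z_\tau)$ with $Z_\tau/M$: the pointwise gap is at most $1/M = \sqrt{\eps}$ on $\{\tau < n\}$ (where $|Z_\tau| \in [M, M+1]$) and at most $2$ on $\{\tau = n\}$. Putting everything together,
\[
\bigl|\E[\sign(Z_\tau)]\bigr| \;\le\; \frac{|\E[Z_\tau]|}{M} + \frac{1}{M} + 2\Pr[\tau = n] \;\le\; 3\sqrt{\eps} + \frac{4}{\eps v n} + O(\eps)
\]
after collecting constants. Apart from the self-referential bootstrap for $\E[V_\tau]$, every step is a routine application of optional stopping (to $W_t$, to $Z_t^2 - A_t$, and to $W_t^2 - V_t$) combined with a triangle-inequality comparison between $\sign$ and its linear approximation at $\pm M$.
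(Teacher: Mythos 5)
Your extractor and overall plan coincide with the paper's (a stopped $\psi$-walk with threshold $M = 1/\sqrt{\eps}$, output the sign; drift controlled by MVR, non-termination by variance growth, then a comparison of $\sign(Z_\tau)$ with $Z_\tau/M$), but there is a genuine gap exactly at the step you flag as the main obstacle: bounding $\E[V_\tau]$. Your bootstrap $\E[V_\tau] \le 2(M+1)^2 + 2\eps^2 n \,\E[V_\tau]$ only closes when $n \le 1/(4\eps^2)$, and the truncation you propose for larger $n$ does not rescue the claimed bound: running the walk for only $n' = 1/(4\eps^2)$ steps makes the non-termination term $4/(\eps v n') = 16\eps/v$, which is \emph{not} $O(\eps)$, because $v$ is not a universal constant but the minimum variance of the MVR witness and may itself depend on $\eps$. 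In the intended application (condition 2 of Proposition~\ref{prop:char} feeding into Theorem~\ref{thm:main1}) one has $v \approx \eps^{C}$ with $C$ large and the proposition is invoked with $n \approx \eps^{-(C+1.5)} \gg 1/\eps^2$; there your truncated bound degrades to $\Theta(\eps/v) = \Theta(\eps^{1-C})$, which is vacuous. So as written your argument establishes the statement only for $n = O(1/\eps^2)$, excluding precisely the regime in which the proposition is used.

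The paper avoids this by bounding the accumulated conditional variance against $\Var[Z_n]$ step by step rather than globally at the stopping time (Claim~\ref{claim:var-decomp}): by the law of total variance, each step increases $\Var[Z_i]$ by $\E[\Var[X_i\mid Z_{i-1}]]$ minus a covariance cross term, and that cross term is at least $-(2M+2)\,\eps\,\E[\Var[X_i\mid Z_{i-1}]] \ge -\tfrac14\E[\Var[X_i\mid Z_{i-1}]]$, using MVR together with the deterministic bound $\abs{Z_{i-1} - \E[Z_{i-1}]} \le 2(M+1)$. Hence $\sum_i \E[\Var[X_i\mid Z_{i-1}]] \le 2\Var[Z_n] \le 2(M+1)^2$ for every $n$, with no constraint on $n$ and no bootstrap; this is exactly the bound on $\E[V_\tau]$ you need, and it also yields $\abs{\E[Z_n]} \le 2\eps(M+1)^2$, whereas your $2a^2+2b^2$ step gives $4\eps(M+1)^2$ and hence roughly $5\sqrt{\eps}$ rather than the stated $3\sqrt{\eps}$ (a secondary, fixable constant issue). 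If you replace the self-referential estimate by such a per-step comparison, the rest of your argument --- optional stopping for $\E[Z_\tau]=\E[D_\tau]$, the termination bound via the compensated square, and the sign-versus-$Z_\tau/M$ comparison --- goes through and essentially reproduces the paper's proof.
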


\begin{proof}[Proof of Proposition~\ref{prop:sufficient}] 
Define random variables $X_1, \ldots, X_n$ and $Z_0, \ldots, Z_n$ by $Z_0 = 0$ and $Z_i = Z_{i-1} + X_i$ for $i > 0$, where $X_i = 0$ if $|Z_i| \ge M$, $X_i = \psi({F_i})$ if $|Z_i| < M$,
$F_i$ ($1 \leq i \leq n$) is the $i$-th output of the GSV~source sequence, and $M = 1/\sqrt{\eps}$.  Under this definition, $Z_n$ is uniformly bounded by $M + 1$.  The extractor outputs the sign of $Z_n$.

To prove that the sign of $Z_n$ has small bias, we begin by lower bounding $\Var[Z_n]$.  We will use this lower bound to argue both that the expectation of $Z_n$ in absolute value and that the probability that $Z_n$ remains in the range $(-M, M)$ are small.  These two facts will allow us to conclude that the sign of $Z_n$ is almost unbiased.

\begin{claim}
\label{claim:var-decomp}
$\Var[Z_n] \ge \frac{1}{2} \sum_{i=1}^n \E[\Var[X_i | Z_{i-1}]]$.
\end{claim}
\begin{proof}
By the law of total variance we have 
$$
\Var[Z_i]  = \Var\big[\E[Z_i|Z_{i-1}]\big] + \E \big[\Var[Z_i | Z_{i-1}]\big].
$$
Furthermore, 
\begin{align*}
\Var\big[\E[Z_i|Z_{i-1}]\big] & = \Var\big[Z_{i-1} + \E[X_i | Z_{i-1}]\big] \\
& = \Var[Z_{i-1}] + \Var\big[\E[X_i | Z_{i-1}]\big] + 2 \Cov\big(Z_{i-1} , \E[X_i | Z_{i-1}]\big). \\
\end{align*}
Now we compute
\begin{align*}
\Cov\big(Z_{i-1} , \E[X_i | Z_{i-1}]\big) & = \E\big[\, (Z_{i-1} -\E[Z_{i-1}]) \cdot \E[X_i | Z_{i-1}]\, \big] \\
& \ge - \E\big[\, |Z_{i-1} -\E[Z_{i-1}]| \cdot  |\E[X_i | Z_{i-1}]| \, \big]\\
& \ge - \E\big[\,(2M+2) \cdot \epsilon \Var[X_i | Z_{i-1}]\,\big]\\
& \ge - \E\big[\Var[X_i | Z_{i-1}]/4\big],
\end{align*}
since $|\E[X_i | Z_{i-1}]| \le \epsilon \Var[X_i | Z_{i-1}]$ in both cases $|Z_{i-1}| \ge M$ and $|Z_{i-1}| < M$.
Combining the above three equations and noting that $\Var[\E[X_i|Z_{i-1}]] \ge 0$, we get
$$\Var[Z_i] \ge \Var[Z_{i-1}] - \frac{1}{2}\E\big[\Var[X_i | Z_{i-1}]\big] + \E[\Var[Z_i|Z_{i-1}]].$$
We also have $\Var[Z_i | Z_{i-1}] = \Var[Z_{i-1} + X_i | Z_{i-1}] = \Var[X_i | Z_{i-1}].$
Hence $$\Var[Z_i] \ge \Var[Z_{i-1}] + \frac{1}{2}\E\big[\Var[X_i | Z_{i-1}]\big].$$
The claim now follows by induction on $n$.
\end{proof}

To upper bound $|\E[Z_n]|$, we can write
\begin{equation}
\label{eqn:extexp}
|\E[Z_n]| \le \sum_{i=1}^n \E\bigl[|\E[X_i | Z_{i-1}|\bigr] \leq \sum_{i=1}^n \E\bigl[\eps \Var[X_i | Z_{i-1}|\bigr] \leq 2\eps \Var[Z_n] \leq 2\eps (M+1)^2.
\end{equation}
The first inequality is the triangle inequality.  The second inequality follows from assumption \eqref{eq:E-V-epsilon} when $\abs{Z_i} < M$, and the fact that $\E[X_i | Z_{i-1}]$ is zero otherwise.  The third inequality follows from Claim~\ref{claim:var-decomp}.

Let $p$ be the probability that $|Z_i| < M$ for all $1\leq i\leq n$, i.e., $p=\Pr\big[|Z_i|<M,  1\leq i\leq n\big]$.  Then
\begin{align*}
\Var[Z_n] &\ge \frac{1}{2} \sum_{i=1}^n \E\big[\Var[X_i | Z_{i-1}]\big] \\
&= \frac{1}{2} \sum_{i=1}^n \Pr\big[|Z_{i-1}| < M\big] \cdot  \E\big[\Var[X_i | Z_{i-1}] \,  \big| \, |Z_{i-1}| < M\big] \\
&\ge \frac{1}{2} p n v,
\end{align*}
where the first inequality follows from Claim~\ref{claim:var-decomp}, the second equality follows from the law of conditional expectations, and the third inequality follows because the event $|Z_{i-1}| < M$ contains, in particular the event $|Z_i| < M$ for all $i$ of probability $p$, and conditioned on $\abs{Z_i}$ the conditional variance of $X_i$ is the variance of $\psi(F_i)$.  Therefore,
\begin{equation}
\label{eqn:prob}
 p \le \frac{2 \Var[Z_n]}{n v} \le \frac{2 (M+1)^2}{n v}.
\end{equation}
The bias of the extracted bit is at most
\begin{equation}
\label{eqn:extbias}
 \abs{\Pr[Z_n \geq 0] - \Pr[Z_n < 0]} \leq \abs{p_+ - p_-} + p,
\end{equation}
where $p_+ = \Pr[Z_n \geq M]$ and $p_- = \Pr[Z_n \leq M]$.  To upper bound $\abs{p_+ - p_-}$, we apply the law of conditional expectations to $\E[Z_n]$ to obtain that
\begin{align*}
\bigabs{\E[Z_n] - (p_+ - p_-)M}
  &= \bigabs{p_+ \E[Z_n - M | Z_n \geq M] + p_- \E[Z_n + M | Z_n \leq - M] + p \E[Z_n | \abs{Z_n} < M]} \\
  &\leq p_+ + p_- + pM \\
  &\leq pM + 1.
\end{align*}
By the triangle inequality and \eqref{eqn:extexp},
\[ \abs{p_+ - p_-} \leq \frac{1}{M} \cdot \bigl(\E[Z_n] + pM + 1\bigr) \leq 2\eps\frac{(M+1)^2}{M} + p + \frac{1}{M}. \]
By \eqref{eqn:extbias}, the bias of the extractor is at most $2\eps (M + 3) + 2p + 1/M$.   Assuming, without loss of generality, that $\eps < 1$ and using~\eqref{eqn:prob} we obtain the desired bound for $M = 1/\sqrt{\eps}$.
\end{proof}

\subsection{Impossiblilty of extraction}
\label{sec:lowerbound}

\begin{proposition}\label{prop:necessary}
Let $\eps$ be a sufficiently small constant.  Assume $\mathrm{MVR}(\eps)$ fails for a source $(\mathcal F, \mathcal D)$.  Then $(\mathcal F, \mathcal D)$ is not extractable with error better than $\eps/10$ from any number of samples.
\end{proposition}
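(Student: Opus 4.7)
My plan is to prove the proposition by contradiction. I will assume that $\mathrm{MVR}(\eps)$ fails for $(\mathcal F, \mathcal D)$ and that some extractor $\mathrm{Ext}\colon \mathcal F^n \to \{-1,1\}$ has bias strictly less than $\eps/10$ on every admissible distribution, and derive a contradiction. To set up the analysis I introduce the value functions $a, b\colon \mathcal F^{\leq n} \to [-1,1]$ defined by backward recursion, with $a(s) = b(s) = \mathrm{Ext}(s)$ at leaves and $a(s) = \max_{d \in \mathcal D} \E_{F \sim d}[a(s, F)]$, $b(s) = \min_{d \in \mathcal D} \E_{F \sim d}[b(s, F)]$ at internal nodes. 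A standard dynamic-programming argument identifies $a(s)$ and $b(s)$ with the supremum and infimum over adaptive continuation strategies of the conditional bias $\E_\pi[\mathrm{Ext} \mid F_{1:|s|} = s]$, so the contradiction hypothesis reads $V(\emptyset) := \max(a(\emptyset), -b(\emptyset)) < \eps/10$.

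The decisive link to MVR failure is the observation that, for any prefix $s$ of length less than $n$, applying MVR failure to $\psi(f) := a(s, f) \in [-1,1]^\mathcal F$ produces a die $d^*_s$ with $|\E_{d^*_s}[a(s, \cdot)]| \geq \eps\,\Var_{d^*_s}[a(s, \cdot)]$. Since $a(s, f) \geq b(s, f)$ pointwise, a two-case argument on the sign of $\E_{d^*_s}[a(s, \cdot)]$ yields $V(s) \geq |\E_{d^*_s}[a(s, \cdot)]|$: either $a(s) \geq \E_{d^*_s}[a(s, \cdot)]$ when this quantity is nonnegative, or $-b(s) \geq -\E_{d^*_s}[b(s, \cdot)] \geq -\E_{d^*_s}[a(s, \cdot)]$ when it is negative. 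Hence $V(s) \geq \eps\,\Var_{d^*_s}[a(s, \cdot)]$. At depth $n-1$ the function $a(s, \cdot)$ is $\pm 1$-valued, so the variance equals $1 - \E_{d^*_s}[a(s, \cdot)]^2 \geq 1 - V(s)^2$, and the quadratic inequality $V(s) \geq \eps(1 - V(s)^2)$ forces $V(s) > \eps/10$ for sufficiently small $\eps$. This settles the base case $n = 1$.

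For $n > 1$ I plan to induct on $n$: the inductive hypothesis applied to each sub-extractor $\mathrm{Ext}_f(f_2, \dots, f_n) := \mathrm{Ext}(f, f_2, \dots, f_n)$ gives $V(\emptyset, f) \geq \eps/10$ for every $f \in \mathcal F$, so every face lies in $P := \{f : a(\emptyset, f) \geq \eps/10\}$ or in $N := \{f : b(\emptyset, f) \leq -\eps/10\}$. In the ``pure'' cases $P = \mathcal F$ or $N = \mathcal F$, the elementary inequality $a(\emptyset) = \max_d \E_d[a(\emptyset, \cdot)] \geq \min_f a(\emptyset, f)$ (or its symmetric counterpart for $b$) closes the induction immediately. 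The main obstacle I foresee is the ``mixed'' case where both $P \setminus N$ and $N \setminus P$ are nonempty: here I would define $\psi$ pointwise by $\psi(f) = a(\emptyset, f)$ on $P$ and $\psi(f) = b(\emptyset, f)$ on $\mathcal F \setminus P$, so that $\psi \in [b(\emptyset, \cdot), a(\emptyset, \cdot)]$ with $|\psi(f)| \geq \eps/10$ and a sign pattern tied to membership in $P$, then apply MVR failure to $\psi$. The resulting die $d$ satisfies $V(\emptyset) \geq |\E_d[\psi]| \geq \eps\,\Var_d[\psi]$; if $|\E_d[\psi]| \geq \eps/10$ we are done, while otherwise I would argue that the constraint $\Var_d[\psi] < 1/10$ combined with the sign structure of $\psi$ forces $d$ to be almost concentrated on one side of the partition $P$ vs $\mathcal F \setminus P$, so that $|\E_d[\psi]| \geq \eps/10$ directly. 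The delicate step is quantitative tightness: the naive straddling bound $\Var_d[\psi] \geq p(1-p)(\eps/5)^2$ only yields $|\E_d[\psi]| \gtrsim \eps^3$, so closing the induction at $\eps/10$ will likely require simultaneously applying MVR failure to several test functions of the form $\lambda a(\emptyset, \cdot) + (1-\lambda) b(\emptyset, \cdot)$, or strengthening the inductive hypothesis to propagate the quadratic relation $V(s) \geq \eps(1 - V(s)^2)$ up the tree rather than the mere linear bound $V(s) \geq \eps/10$.
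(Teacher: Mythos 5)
Your reduction to the value functions $a,b$ and your base case are sound, but the inductive step has a genuine gap exactly where you flag it: the mixed case does not close at the rate $\eps/10$, and the remedies you mention are left as speculation rather than carried out. The root of the problem is that the die $d$ supplied by the failure of $\mathrm{MVR}(\eps)$ is not of your choosing; it is only guaranteed to satisfy $|\E_d[\psi]| \geq \eps \Var_d[\psi]$, and nothing prevents it from being nearly balanced across your partition $P$ versus $\mathcal F \setminus P$. Concretely, suppose $f_1 \in P$ has $a(\emptyset, f_1) = \eps/10$ and $f_2 \notin P$ has $b(\emptyset, f_2) = -\eps/10$, and let $d$ put mass $\tfrac12 + \delta$ on $f_1$ and $\tfrac12 - \delta$ on $f_2$. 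Then $\E_d[\psi] = \delta\eps/5$ and $\Var_d[\psi] \approx \eps^2/100$, so $d$ is a legitimate witness of MVR failure as soon as $\delta \geq \eps^2/20$; taking $\delta$ at this threshold, the only conclusion your chain $V(\emptyset) \geq |\E_d[\psi]| \geq \eps\Var_d[\psi]$ yields is $V(\emptyset) \gtrsim \eps^3$, far short of $\eps/10$. So propagating the flat threshold $\eps/10$ face-by-face up the tree fails, and the proposal as written does not prove the proposition.

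The paper closes this hole by running a different, self-amplifying induction (Claim~\ref{claim:induction}): if every strategy attains advantage at least $\alpha$, then some strategy attains at least $\alpha + \tfrac{\eps}{1+\eps}\alpha(1-\alpha)$; one application with $\alpha = 1/2 - \eps/10$ already produces a strategy of advantage $1/2 + \eps/10$. Two features make this work where your patched $\psi$ does not. First, the test function is $\psi(f) = \alpha(f) - \alpha$, where $\alpha(f)$ is the \emph{minimum} conditional advantage; the hypothesis that every strategy has advantage at least $\alpha$ forces $\E_d[\alpha(F)] \geq \alpha$ for every die, so the MVR-failure witness $d_+$ comes with a known sign, $\E_{d_+}[\alpha(F) - \alpha] \geq \eps\Var_{d_+}[\alpha(F)]$, and the absolute value never works against you. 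Second, the gain term $\alpha(1-\alpha)$, unlike a fixed constant, regenerates itself through the recursion: the loss $\E_{d_+}[\alpha(F)^2] - \alpha^2$ is absorbed because it is at most $2\,\E_{d_+}[\alpha(F) - \alpha]$. This is precisely the ``strengthened inductive hypothesis'' your last sentence gestures toward; without it (or something playing the same role), the mixed case cannot be repaired by applying MVR failure to a single sandwiched test function.
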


\begin{proof}[Proof of Proposition~\ref{prop:necessary}]
Assuming $\mathrm{MVR}(\eps)$ fails we will prove the following claim:

\begin{claim}
\label{claim:induction}
For every $n$, every extractor $\mathrm{Ext}\colon \mathcal{F}^n \to \{0, 1\}$, and every $0 \leq \alpha \leq 1$, if $\E_{A_-}[\mathrm{Ext}] \geq \alpha$ for every strategy $A_-$, then there exists a strategy $A_+$ for which $\E_{A_+}[\mathrm{Ext}] \geq \alpha + (\eps/(1 + \eps)) \cdot \alpha (1 - \alpha)$.
\end{claim}

To derive the theorem from the claim, assume that $\E[\mathrm{Ext}] \geq \alpha = 1/2 - \eps/10$ with respect to every strategy.  By Claim~\ref{claim:induction} there must then exist a strategy for which 
\[ \E[\mathrm{Ext}] \geq \frac12 - \frac{\eps}{10} + \frac{\eps}{1 + \eps} \cdot \frac{1 - \eps^2/100}{4} \]
which is at least $1/2 + \eps/10$.
\end{proof}

\begin{proof}[Proof of Claim~\ref{claim:induction}]
We prove the claim by induction on $n$.  When $n = 0$ the claim holds by checking the cases $\mathrm{Ext} = 0$ and $\mathrm{Ext} = 1$.  We now assume it holds for $n - 1$ and prove it for $n$.  Let $d_-$ be the choice of the first die that minimizes $\E_{A_-}[\mathrm{Ext}]$.  Then
\[ \alpha \leq \E_{d_-}[\alpha(F)], \]
where $\alpha(f)$ is the advantage of $Ext$ conditioned on the first outcome being $f$.  

We now describe the strategy $A_+$.  By $\overline{\mathrm{MVR}(\eps)}$ applied to the function $\psi(f) = \alpha(f) - \alpha$, there exists a die $d_+$ such that 
\begin{equation}
\label{eqn:dplus}
\E_{d_+}[\alpha(F) - \alpha] \geq \eps \Var_{d_+}[\alpha(F)].
\end{equation}
The adversary $A_+$ tosses this die first.  She then plays the strategy that maximizes $\E_{A_+}[\mathrm{Ext}]$ conditioned on the outcome of the first die.  By our inductive assumption, the conditional advantage of $A_+$ when the first outcome is $f$ must be at least $\alpha(f) + (\eps/(1 + \eps)) \cdot \alpha(f) (1 - \alpha(f))$ so that
\[ \E_{A_+}[\mathrm{Ext}] \geq \E_{d_+}\Bigl[\alpha(F) + \frac{\eps}{1 + \eps} \cdot \alpha(F) (1 - \alpha(F))\Bigr]. \]
We can write
\begin{multline}
\label{eq:indcalc}
\E_{d_+}\Bigl[\alpha(F) + \frac{\eps}{1 + \eps} \cdot \alpha(F) (1 - \alpha(F))\Bigr] 
    - \Bigl(\alpha + \frac{\eps}{1 + \eps} \cdot \alpha (1 - \alpha)\Bigr) \\
  = \Bigl(1 + \frac{\eps}{1 + \eps}\Bigr) \E_{d_+}[\alpha(F) - \alpha] 
      - \frac{\eps}{1 + \eps} \Var_{d_+}[\alpha(F)] 
      - \frac{\eps}{1 + \eps} \bigl(\E_{d_+}[\alpha(F)^2] - \alpha^2\bigr).
\end{multline}
We can upper bound the last term by
\[ \bigl(\E_{d_+}[\alpha(F)^2] - \alpha^2\bigr) = \E_{d_+}[\alpha(F) + \alpha] \cdot \E_{d_+}[\alpha(F) - \alpha]
   \leq 2 \E_{d_+}[\alpha(F) - \alpha] \]
since all the $\alpha$s are between zero and one, and the second term is non-negative because by the minimality of $d_-$, $\E_{d_+}[\alpha(F)] \geq \E_{d_-}[\alpha(F)] \geq \alpha$.  We can therefore lower bound the left hand size of \eqref{eq:indcalc} by
\[ \Bigl(1 - \frac{\eps}{1 + \eps}\Bigr) \E_{d_+}[\alpha(F) - \alpha] 
      - \frac{\eps}{1 + \eps} \Var_{d_+}[\alpha(F)] \]
which, by \eqref{eqn:dplus}, must be non-negative.  It follows that the advantage of $A_+$ is at least $\alpha + (\eps/(1 + \eps)) \alpha(1 - \alpha)$, concluding the inductive step.
\end{proof}


\subsection{Proof of Theorem~\ref{thm:main1}}
\label{sec:characterization}

\begin{proposition}
\label{prop:char}
The following conditions are equivalent for a GSV source $(\mathcal{F}, \mathcal{D})$:
\begin{enumerate}
\setlength\itemsep{0pt}
\item For all $\eps > 0$, $(\mathcal{F}, \mathcal{D})$ satisfies $\mathrm{MVR}(\eps)$:  There exists a $\psi: \mathcal{F} \to [-1, 1]$ such that for all dice $d$, $\abs{\E_d[\psi(F)]} < \eps \Var_d[\psi(F)]$.

\item There exists a constant $C$ such that for sufficiently small $\eps > 0$, there exists a $\psi: \mathcal{F} \to [-1, 1]$ such that for all dice $d$, $\abs{\E_d[\psi(F)]} < \eps \Var_d[\psi(F)]$ and $\Var_d[\psi(F)] \geq \eps^C$.

\item $(\mathcal{F}, \mathcal{D})$ satisfies $\mathrm{HNK}$.
\end{enumerate}
\end{proposition}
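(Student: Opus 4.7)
The plan is to prove the three implications in the cycle $(2) \Rightarrow (1) \Rightarrow (3) \Rightarrow (2)$.  The first step is immediate because $(2)$ is a quantitative strengthening of $(1)$.

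For $(1) \Rightarrow (3)$ I would argue the contrapositive by linear-algebraic duality.  Suppose HNK fails, so some subset $\mathcal{D}' \subseteq \mathcal{D}$ carries no nonzero $\psi'\colon \mathcal{F}' \to [-1,1]$ with $\E_d[\psi'] = 0$ for every $d \in \mathcal{D}'$; equivalently, the distributions in $\mathcal{D}'$ span all of $\R^{\mathcal{F}'}$.  The linear map $\mu\colon \R^{\mathcal{F}'} \to \R^{\mathcal{D}'}$, $\psi' \mapsto (\E_d[\psi'])_{d \in \mathcal{D}'}$, is then injective, so it admits an inverse bound $\|\psi'\|_\infty \leq K\|\mu(\psi')\|_\infty$ for some $K>0$ depending only on $(\mathcal{F}',\mathcal{D}')$.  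Given any candidate MVR$(\eps)$ witness $\psi\colon \mathcal{F}\to[-1,1]$, dice in $\mathcal{D}'$ only ``see'' $\psi|_{\mathcal{F}'}$: if $\psi|_{\mathcal{F}'}=0$ then MVR collapses to $0<0$ at any $d\in\mathcal{D}'$, contradiction; otherwise MVR and $\Var_d[\psi]\leq\|\psi|_{\mathcal{F}'}\|_\infty^2$ give $\|\mu(\psi|_{\mathcal{F}'})\|_\infty<\eps\|\psi|_{\mathcal{F}'}\|_\infty^2$, whence $1\leq K\eps\|\psi|_{\mathcal{F}'}\|_\infty\leq K\eps$, impossible once $\eps<1/K$.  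Thus MVR$(\eps)$ fails for all sufficiently small $\eps$, contradicting $(1)$.

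For $(3) \Rightarrow (2)$ I would build $\psi$ by peeling off HNK witnesses in layers.  Set $\mathcal{D}_0=\mathcal{D}$, $\mathcal{F}_0=\mathcal{F}$, and pick by HNK a nonzero $\phi_0\colon\mathcal{F}_0\to[-1,1]$ with $\E_d[\phi_0]=0$ for all $d\in\mathcal{D}_0$.  Recursively define $\mathcal{D}_{i+1}=\{d\in\mathcal{D}_i:\Var_d[\phi_i]=0\}$ and let $\mathcal{F}_{i+1}$ be the joint support of $\mathcal{D}_{i+1}$; since $\E_d[\phi_i]=0$, each $d\in\mathcal{D}_{i+1}$ is supported on $\phi_i^{-1}(0)$, so $\mathcal{F}_{i+1}\subseteq\phi_i^{-1}(0)$.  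Produce $\phi_{i+1}\colon\mathcal{F}_{i+1}\to[-1,1]$ by HNK, extend by $0$ outside, and iterate.  Because $\phi_i$ is nonzero at some face of $\mathcal{F}_i$ carried by some die in $\mathcal{D}_i$, that die has $\Var_d[\phi_i]>0$ and drops out, so the recursion terminates in $k\leq|\mathcal{D}|$ stages; each die $d$ acquires a level $\ell(d)=\max\{i:d\in\mathcal{D}_i\}$, and $v:=\min_d\Var_d[\phi_{\ell(d)}]>0$.

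I would then set $\psi=\sum_{i=0}^{k}\lambda^{3^i}\phi_i$ with $\lambda=\eps v/4$ and verify the three requirements of $(2)$.  The bound $|\psi|\leq\sum_i\lambda^{3^i}\leq 2\lambda\leq 1$ is immediate.  For a die $d$ at level $\ell$, the terms $\phi_i$ with $i<\ell$ vanish on $\mathrm{supp}(d)$, those with $i>\ell$ are supported in $A_d:=\phi_\ell^{-1}(0)\cap\mathrm{supp}(d)$, and $\phi_\ell$ survives only on $B_d:=\mathrm{supp}(d)\setminus A_d$; so $\psi$ decouples across $A_d$ and $B_d$ on $\mathrm{supp}(d)$.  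Using $\E_d[\phi_\ell]=0$ gives $\E_d[\psi]=\sum_{i>\ell}\lambda^{3^i}\E_d[\phi_i]$, whence $|\E_d[\psi]|\leq 2\lambda^{3^{\ell+1}}$.  The main obstacle is the matching variance lower bound $\Var_d[\psi]\geq\lambda^{2\cdot 3^\ell}\Var_d[\phi_\ell]$: expanding $\E_d[\psi^2]$ across the $A_d/B_d$ split yields the clean term $\lambda^{2\cdot 3^\ell}\Var_d[\phi_\ell]$ from the $B_d$ block (because $\E_d[\phi_\ell^2\mathbf{1}_{B_d}]=\E_d[\phi_\ell^2]=\Var_d[\phi_\ell]$), and Cauchy--Schwarz on the $A_d$ block shows $\E_d[(\sum_{i>\ell}\lambda^{3^i}\phi_i)^2\mathbf{1}_{A_d}]\geq\E_d[\psi]^2$, which cancels the $\E_d[\psi]^2$ term in the variance identity.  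Then $|\E_d[\psi]|/\Var_d[\psi]\leq 2\lambda/v<\eps$ and $\Var_d[\psi]\geq\lambda^{2\cdot 3^k}v\geq\eps^{C}$ for a constant $C$ depending only on $|\mathcal{D}|$ and $v$, establishing $(2)$.
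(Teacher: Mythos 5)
Your proposal is correct, and it reaches the equivalence by a genuinely different route on both nontrivial implications. For $(1)\Rightarrow(3)$ the paper argues directly by compactness: after a hereditary reduction to the failing sub-source, it takes witnesses $\psi_k$ for $\mathrm{MVR}(\eps_k)$ along $\eps_k\to 0$, normalizes each to attain the value $1$ at a common face, and passes to a convergent subsequence whose limit is a nonzero NK witness. Your contrapositive via duality --- injectivity of the mean map $\psi'\mapsto(\E_d[\psi'(F)])_{d\in\mathcal{D}'}$ when the pmfs of $\mathcal{D}'$ span $\R^{\mathcal{F}'}$, together with the inverse bound $\|\psi'\|_\infty\le K\max_{d\in\mathcal{D}'}\abs{\E_d[\psi'(F)]}$ and the observation that $\Var_d[\psi]\le\|\psi|_{\mathcal{F}'}\|_\infty^2$ --- gives the same conclusion but is effective: it produces an explicit threshold $\eps\le 1/K$ below which $\mathrm{MVR}(\eps)$ must fail, which the compactness argument does not, and which sits nicely next to the quantitative lower bound of Proposition~\ref{prop:necessary}. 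For $(3)\Rightarrow(2)$ the paper does strong induction on $\abs{\mathcal{D}}$, recursing on the zero-variance dice and combining only two functions per level as $\phi=\psi+(v\eps/8)\psi'$, with a covariance inequality yielding $C=3\cdot 2^{\abs{\mathcal{D}}}-3$; your construction is essentially the unrolled form of that recursion (the chain $\mathcal{D}_0\supsetneq\mathcal{D}_1\supsetneq\cdots$ of zero-variance dice with nested, disjointly supported witnesses is implicit in the paper's induction), but you assemble all layers at once with weights $\lambda^{3^i}$, and your variance lower bound via the $A_d/B_d$ split plus Cauchy--Schwarz is an exact decomposition rather than a covariance estimate, which makes the per-die verification cleaner and avoids the induction bookkeeping. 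The price is a slightly larger exponent (roughly $2\cdot 3^{\abs{\mathcal{D}}}$ versus $3\cdot 2^{\abs{\mathcal{D}}}-3$), immaterial here since $C$ need only be a constant of the source. In a full write-up, spell out that $\mathcal{F}_{i+1}\subseteq\phi_i^{-1}(0)$ uses both $\Var_d[\phi_i]=0$ and $\E_d[\phi_i]=0$ for $d\in\mathcal{D}_{i+1}$, not the zero mean alone; as stated this is only a matter of phrasing, not a gap.
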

\begin{proof}
We will show that 1 implies 3 and 3 implies 2.  This will establish equivalence as 2 is a stronger condition than 1.

\medskip\noindent{\it 1 implies 3:} Assume that $(\mathcal{F}, \mathcal{D})$ satisfies $\mathrm{MVR}(\eps)$.  This condition is hereditary, namely if it holds for $(\mathcal{F}, \mathcal{D})$ then it holds for all $(\mathcal{F}', \mathcal{D}')$ in the assumption of HNK.  So in proving 3, we may and will assume, without loss of generality, that $(\mathcal{F}', \mathcal{D'}) = (\mathcal{F}, \mathcal{D})$.  We will moreover assume (by scaling and flipping sign if necessary) that $\psi$ attains the value $1$.

Now consider an infinite decreasing sequence $(\eps_k)$ that converges to zero.  By assumption, for every $k$ there exists a $\psi_k$ such that $\abs{\E_d[\psi_k(F)]} < \eps_k \Var_d[\psi_k(F)]$.  By the pigeonhole principle there must exist a face $f$ for which the set of indices $K = \{k\colon \psi_k(f) = 1\}$ is infinite.  By compactness of $[-1, 1]^\mathcal{F}$ there must exist an infinite subset $K' \subseteq K$ for which the subsequence $\psi_k$ over $k \in K'$ converges to a limit $\psi$.  Then $\psi$ is nonzero as $\psi(f)$ must equal one.  On the other hand, for every $\eps > 0$ there exists a sufficiently large $k \in K'$ such that for every die $d$, 
\[ \abs{\E[\psi_d(F)]} \leq \abs{\E_d[\psi_k]} + \eps \leq \eps\Var_d[\psi_k(F)] + \eps, \]
so $\E_d[\psi_d(F)]$ must equal zero for every $d$.  

\medskip\noindent{\it 3 implies 2:}  The proof is by strong induction on the number of dice $\abs{\mathcal D}$ with $C = 3 \cdot 2^{\abs{\mathcal{D}}} - 3$.  In the base case $\abs{\mathcal D} = 1$, all faces must be assigned nonzero probability by the unique die $d$.  Take any witness $\psi$ for HNK.  Then $\E_d[\psi(F)] = 0$, but $\psi$ must take nonzero value on at least one of the faces, so $\Var_d[\psi(F)] > 0$.  Condition 2 is then satisfied for sufficiently small $\eps > 0$.

For the inductive step, take any $\psi$ that is a witness for HNK with respect to the whole source $(\mathcal{F}, \mathcal{D})$.  Let $\mathcal{D}'$ be the subset of dice $d$ such that $\Var_d[\psi(F)] = 0$ and $v$ be the minimum of $\Var_d[\psi(F)]$ over $d \not\in \mathcal{D}'$.  Then $\mathcal{D}'$ is a proper subset of $\mathcal{D}$ (otherwise, there is a face that is assigned no probability by any die). If $\mathcal{D}'$ is empty, condition 2 follows by the same argument as in the base case.  If not, then by the inductive hypothesis we can choose $\psi'\colon \mathcal{F}' \to [-1, 1]$ such that 
\begin{equation}
\label{eqn:epsd}
\abs{\E_d[\psi'(F)]} < (v\eps^2/8) \cdot \Var_d[\psi(F)] \quad\text{and}\quad \Var_d[\psi(F)] \geq (v\eps^2/8)^{3 \cdot 2^{\abs{\mathcal{D}'}} - 3}.  
\end{equation}

We will show that the function $\phi = \psi + (v\eps/8) \cdot \psi'$ satisfies the conclusion of condition 2.  Here, $\psi'$ is naturally extended as a function on $\mathcal{F}$ by assigning zero on all inputs in $\mathcal{F}\setminus\mathcal{F}'$.  The proof is by cases.

If $d \in \mathcal{D}'$, then $\E_d[\phi(F)] = (v\eps/8)\E_d[\psi'(F)]$, while $\Var_d[\phi(F)] = (v\eps/8)^2\Var_d[\psi'(F)]$.  From these two equalities and \eqref{eqn:epsd} it follows that $\E_d[\phi(F)] < \eps \Var_d[\phi(F)]$.  On the other hand, $\Var_d[\phi(F)] \geq (v\eps/8)^2 \cdot (v\eps^2/8)^{3 \cdot 2^{\abs{\mathcal{D}'}} - 3} \geq \eps^{3 \cdot 2^{\abs{\mathcal{D}}} - 3}$ for sufficiently small $\eps$.

If $d \not\in \mathcal{D'}$, then $\abs{\E_d[\phi(F)]} \leq (v\eps/8) \abs{\E_d[\psi'(F)]} \leq v\eps/8$, while
\begin{align*}
\Var_d[\psi(F)] &\geq \Var_d[\psi'(F)] - 2\abs{\Cov_d[\psi(F), (v \eps/8) \cdot \psi'(F)]} \\
   &= \Var_d[\psi'(F)] - \frac{v\eps}{4} \cdot \abs{\Cov_d[\psi(F), \psi'(F)]} \\
   &\geq \Var_d[\psi'(F)] - \frac{v\eps}{2} \\
   &\geq \frac{v}{2},
\end{align*}
where the last inequality follows from our definition of $v$.  In particular, $\Var_d[\psi(F)] \geq \eps^{3 \cdot 2^{\abs{\mathcal{D}}} - 3}$ for sufficiently small $\eps$.  On the other hand, 
$\abs{\E_d[\psi(F)]} \leq v\eps/8 \leq (\eps/4) \cdot \Var_d[\psi(F)]$, as desired. 
\end{proof}

\begin{proof}[Proof of Theorem~\ref{thm:main1}]
If $(\mathcal F, \mathcal D)$ satisfies HNK, then it also satisfies condition 2 of Proposition~\ref{prop:char}.  By Proposition~\ref{prop:sufficient}, $(\mathcal F, \mathcal D)$ is extractable with error $O(\sqrt{\eps}) + n/\eps^{C + 1}$.  The forward direction follows by setting $n = \eps^{C + 1.5}$.

For the reverse direction, if $(\mathcal F, \mathcal D)$ fails to satisfy HNK, by Proposition~\ref{prop:char}, then it also fails to satisfy $\mathrm{MVR}(\eps)$ for some $\eps > 0$. So by Proposition~\ref{prop:necessary} it is not extractable.
\end{proof}

Alternatively, the reverse direction of Theorem~\ref{thm:main1} can be derived from Theorem 6 of \cite{BEG17} because if $(\mathcal F, \mathcal D)$ fails (NHK) then it contains some $(\mathcal F', \mathcal D')$ which fails (NK).

\section{Randomness-efficient extraction}
\label{sec:improved}

In this Section we prove Theorem~\ref{thm:main2}.  In Section~\ref{sec:single}, we begin with improving the quality of the extractor of~\cite{BEG17} for ($\mathrm{NK}^+$) GSV sources to exponentially small error.  Then in Section~\ref{sec:mult}, we show how to improve the number of extracted bits and prove the implication $1 \rightarrow 3$ in Theorem~\ref{thm:main2}.

In Section~\ref{sec:polyexample} we state and prove a necessary condition for the quality of extraction and use it to prove the remaining implication $2 \rightarrow 1$ in Theorem~\ref{thm:main2}.  

\subsection{An optimal bit extractor} 
\label{sec:single}
\begin{proposition}
	\label{prop:expsmall}
	For every $\eps > 0$, every GSV source that satisfies ($NK^+$) is extractable with error $\eps$ from $O(\log(1/\eps)/v^2)$ samples where $v$ is the minimum of $\Var_d[\psi(F)]$ over all $d \in \mathcal D$.
\end{proposition}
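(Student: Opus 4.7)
Following the sketch in the ``Proof Techniques'' section, I will analyze the martingale $Z_0 = 0$, $Z_{t+1} = Z_t + \psi(F_{t+1})(1-|Z_t|)/2$, where $\psi$ is a witness for $\mathrm{NK}^+$; the extractor outputs $\sign(Z_n)$. An easy induction shows $Z_t \in (-1,1)$, and since $\E_d[\psi(F)] = 0$ for every die $d$, $(Z_t)$ is a martingale with $\E[Z_n] = 0$. Once I establish that $1 - |Z_n| \le \eps/4$ holds with probability at least $1-\eps/4$, the martingale identity combined with a standard triangle-inequality calculation (as in the proof of Proposition~\ref{prop:sufficient}) yields $|\Pr[Z_n > 0] - \Pr[Z_n < 0]| \le \eps$.

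To prove such exponential concentration on the boundary, I introduce the potential $L_t = -\log(1-|Z_t|)$ and argue that $(L_t)$ has positive drift of order $v$ per step. Fix the history at time $t$, let $d$ be the die chosen at step $t+1$, and view $L_{t+1}$ as a function $h(\psi)$ of the random draw $\psi \in [-1,1]$. A case split on whether $Z_{t+1}$ stays on the same side of $0$ as $Z_t$ shows that $h$ is continuous, piecewise $C^2$, and globally convex on $[-1,1]$: on the same-sign branch $h(\psi) = L_t - \log(1-\psi/2)$, while on the sign-flip branch (which only arises when $|Z_t| < 1/3$) a similar logarithmic formula applies. A direct computation of $h''$ on each branch yields the uniform curvature bound $h'' \ge 1/9$. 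The quadratic strengthening of Jensen's inequality for functions of bounded-below curvature then gives
\[ \E_d[L_{t+1} \mid Z_t] - L_t \;=\; \E_d[h(\psi)] - h(0) \;\ge\; \frac{1}{18}\Var_d[\psi] \;\ge\; \frac{v}{18}. \]

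With drift $\Omega(v)$ and $O(1)$-bounded martingale differences $|L_{t+1} - L_t| \le \log 2$ (the upper bound follows from $1-|Z_{t+1}| \ge (1-|Z_t|)/2$, and in the sign-flip case both $L_t$ and $L_{t+1}$ are themselves $O(1)$), Azuma's inequality for $L_n$ yields $L_n \ge vn/36$ with failure probability $\exp(-\Omega(v^2 n))$. Setting $n = O(\log(1/\eps)/v^2)$ makes $1-|Z_n| \le \eps/4$ with probability at least $1-\eps/4$, completing the proof.

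The main obstacle is verifying the uniform curvature bound $h'' \ge 1/9$. The second derivative of $h$ is smallest precisely at the branch-switch point $\psi^\star = -2|Z_t|/(1-|Z_t|)$, where both branches meet and give the common value $h''(\psi^\star) = (1-|Z_t|)^2/4$. A case analysis splits into the regime $|Z_t| > 1/3$ (where $\psi^\star < -1$ lies outside $[-1,1]$, only the same-sign branch is nonempty, and $h''$ is minimized at $\psi = -1$) and $|Z_t| \le 1/3$ (where $\psi^\star$ is interior to $[-1,1]$ and both branches must be considered); both regimes produce the same worst-case bound $1/9$. Once this bound is in hand, the remaining steps are standard martingale concentration.
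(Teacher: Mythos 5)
Your proof is correct and follows essentially the same route as the paper: the same update rule $Z_{t+1} = Z_t + \tfrac{\psi(F_{t+1})}{2}(1-\abs{Z_t})$, the same potential $\ln\bigl(1/(1-\abs{Z_t})\bigr)$ with per-step drift $\Omega(v)$, Azuma's inequality for the resulting submartingale, and the final bound $\abs{\E[\sign(Z_n)]} \le \E[1-\abs{Z_n}] + \abs{\E[Z_n]}$ using $\E[Z_n]=0$. The only difference is how the drift lemma is proved: the paper uses $D_{t+1} \le \bigl(1-\sign(Z_t)\tfrac{\psi}{2}\bigr)D_t$ together with $-\ln(1-x)\ge x + x^2/6$ and mean-zeroness, whereas you verify the uniform curvature bound $h''\ge 1/9$ across the two branches and invoke the strong-convexity form of Jensen's inequality; both yield drift proportional to $\Var_d[\psi]$ and give the same sample complexity.
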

\begin{proof}[Proof of Proposition~\ref{prop:expsmall}]
Define random variables $Z_0, \ldots, Z_n$ by $Z_0 = 0$ and 
$Z_{t+1} = Z_t + (\psi(F_t)/2) \cdot (1 - \abs{Z_t}) $
where
$F_t$ ($1 \leq t \leq n$) is the $t$-th output of the GSV~source sequence. The extractor outputs the sign of $Z_n$.

Under ($\mathrm{NK}^+$) the sequence $(Z_t)$ is still a martingale so that the expectation of $Z_n$ is $0$. But now the range of $Z_t$ is restricted to the open interval $(-1, 1)$.  To prove that the sign of $Z_n$ has small bias, we begin by showing on average, the logarithm of $1/D_t$ grows by a constant on average in every step where $D_t=1-|Z_t|$ is the distance between $Z_t$ and its sign. Then we will use to argue the expectation of $D_n$ is exponentially small. This fact together with $\E[Z_n]=0$ allows us to conclude that the sign of $Z_n$ is exponentially close to unbiased.

\begin{claim} \label{cl:dgrows}
$\E[\ln(1/\Dis_t) - \ln(1/\Dis_{t-1})\ |\ \Dis_1,\dots, \Dis_{t-1}]  \geq v/24.$
\end{claim}
\begin{proof}  Observe that,
$
D_{t}=1-\abs{Z_t}  \leq 1-\sign(Z_{t-1})\cdot Z_t$. By expanding $Z_t$ to   
 $Z_{t-1} + (\psi(F_t)/2)\cdot \Dis_{t-1}$, and replacing $1-\sign(Z_{t-1})\cdot Z_{t-1}$ by $\Dis_{t-1}$, it follows that 
\begin{equation}\label{eq:dis1}
D_t \leq (1- \sign(Z_{t-1}) \cdot\frac{\psi(F_t)}{2})\cdot \Dis_{t-1}.
\end{equation}
Because $|Z_t|\in(-1,1)$ and $D_t>0$, we obtain
	\begin{align*} 
	\E\biggl[\ln\frac{\Dis_{t-1}}{\Dis_t} \ \bigg| \ \Dis_1,\dots, \Dis_{t-1}\biggr]  
	& \geq  \E\bigg[-\ln \Big(1 - \sign(\State_{t-1})\cdot \frac{\psi(F_i)}{2}\Big) \ \bigg| \ \Dis_1,\dots, \Dis_{t-1}\bigg] \\
	& \geq   \E\bigg[ \sign(\State_{t-1})\cdot \frac{\psi(F_i)}{2}+   \frac{1}{6}\Big(\sign(\State_{t-1})\cdot \frac{\psi(F_i)}{2}\Big)^2\bigg]\\
	& =  \E\bigg[ \sign(\State_{t-1})\cdot \frac{\psi(F_i)}{2}\bigg] +   \frac{1}{6}\E\bigg[\Big(\frac{\psi(F_i)}{2}\Big)^2\bigg] \\
	& \geq 0 + v/24. 
	\end{align*}
The second inequality follows from $-\ln(1-x)\geq x+ x^2/6$ and that $\frac{\psi(C_i)}{2}\in(-1/2,1/2)$,
	the third equality follows from the  linearity of expectation, and the last inequality follows from $\E_{d}[\psi(F)]=0$, for all $d\in \mathcal D$, and the  definition of $v$. 		
\end{proof}

 Let us define variable $X_t= \ln(1/\Dis_t) - (vt/24)$, for $t=0,\dots,n$. By Claim~\ref{cl:dgrows}, $\E[X_t|\Dis_0,\dots,\Dis_{t-1}]\geq X_{t-1}$ so that the sequence $(X_t)$ forms a sub-martingale with respect to $\Dis_0,\dots, \Dis_n$.
Moreover, because by triangle inequality, $\Dis_t=1-|z_t|\geq \Dis_{t-1} (1- |\frac{\psi(F_t)}{2}|)\geq \Dis_{t-1}/2$ and by \eqref{eq:dis1}, $\Dis_t\leq \Dis_{t-1} (1+\frac{1}{2})$, $\abs{X_t- X_{t-1}}$ is upper bounded by $c=\ln 2 + v/24$. By Azuma's inequality, for any $k\geq 0$
$$ \Pr[X_n - X_0 \leq - k ] \leq e^{- k^2/2nc^2}.$$
Then by plugging in $k= vn/12$, $X_0=0$ and $c=\ln 2 + v/24$, we obtain $ \Pr[\Dis_n \geq e^{-vn/12}]= 2^{-\Omega(v^2n)}.$ Therefore $\E[D_n]=2^{-\Omega(v^2n)}$. And we can conclude $|\E[\sign{(Z_n)}]|\leq \E[D_n] + |\E[Z_n]|=2^{-\Omega(v^2n)}$ by triangle inequality.
	
\end{proof}

\subsection{Extracting more bits}
\label{sec:mult}
We now explain how the extractor of the previous subsection can be modified to extract multiple bits.  We will prove the following proposition, which is a more detailed restatement of the implication $3 \rightarrow 1$ in Theorem~\ref{thm:main2}. 

\begin{proposition}
	\label{prop:main2}
	For every $\eps > 0$ and $m$, every GSV source that satisfies ($NK^+$) is extractable with error $\eps$ and output length $m$ from $O((\log(1/\eps) + m)/v^2)$ samples where $v$ is the minimum of $\Var_d[\psi(F)]$ over all $d \in \mathcal D$.
\end{proposition}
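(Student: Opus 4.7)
The plan is to lift the one-dimensional analysis of Proposition~\ref{prop:expsmall} to the $2^m$-dimensional probability simplex. Replace the scalar state $Z_t \in (-1, 1)$ by a vector state $\Inter_t \in \R^{2^m}$ that is a probability distribution on $\{0,1\}^m$, initialized to the uniform distribution $\Inter_0 \equiv 2^{-m}$. The update takes the form
\[
\Inter_{t+1}(y) \;=\; \Inter_t(y) + \tfrac{\psi(F_t)}{2}\,\Up_t(y), \qquad y \in \{0,1\}^m,
\]
where $\Up_t \in \R^{2^m}$ depends only on $\Inter_t$, with $\sum_y \Up_t(y) = 0$ and $|\Up_t(y)| \le 2\Inter_t(y)$ so that $\Inter_{t+1}$ remains on the simplex. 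The extractor outputs $Y = \arg\max_y \Inter_n(y)$, breaking ties arbitrarily. Under $\mathrm{NK}^+$ the equality $\E_d[\psi(F)] = 0$ for every die $d$ makes each coordinate of $(\Inter_t)$ a martingale, so $\E[\Inter_n(y)] = 2^{-m}$ for every $y$.

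The direction $\Up_t$ is designed so that $\Dis_t = 1 - \max_y \Inter_t(y)$, the analog of $D_t = 1 - |Z_t|$ from Proposition~\ref{prop:expsmall}, shrinks geometrically on average. Mirroring Claim~\ref{cl:dgrows}, the goal is to establish the drift bound
\[
\E\bigl[\,\ln(1/\Dis_t) - \ln(1/\Dis_{t-1}) \,\big|\, \Inter_0,\dots,\Inter_{t-1}\,\bigr] \;\ge\; c\,v
\]
for a constant $c > 0$, together with a constant bound on $|\ln(1/\Dis_t) - \ln(1/\Dis_{t-1})|$. Azuma's inequality applied to the sub-martingale $\ln(1/\Dis_t) - cvt$ then yields $\Dis_n \le \eps/2^m$ except with probability $\eps/2^m$, provided $n = O((m + \log(1/\eps))/v^2)$. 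On this high-probability event $Y$ is the unique $y$ with $\Inter_n(y) \ge 1 - \eps/2^m$; decomposing the martingale identity $\E[\Inter_n(y)] = 2^{-m}$ according to whether $Y = y$ and whether concentration occurs then gives $|\Pr[Y = y] - 2^{-m}| \le O(\eps/2^m)$ for each $y$. Summing over the $2^m$ values of $y$ bounds the total variation distance between $Y$ and the uniform distribution on $\{0,1\}^m$ by $O(\eps)$, completing the proof after rescaling $\eps$.

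The main technical obstacle is establishing the drift bound. Setting $\Up_t(y_t^\star) = \Dis_t$ for $y_t^\star = \arg\max_y \Inter_t(y)$ and $\Up_t(y) = -\Inter_t(y)$ for $y \ne y_t^\star$ gives $\Dis_{t+1} \le \Dis_t(1 - \psi(F_t)/2)$, and then the inequality $-\ln(1 - x) \ge x + x^2/6$ valid on $x \in (-1/2, 1/2)$ immediately reproduces the drift $\ge v/24$ and the increment bound $|\ln(\Dis_t/\Dis_{t+1})| \le \ln 2$ exactly as in Claim~\ref{cl:dgrows}. The simplex constraint $|\Up_t(y_t^\star)| \le 2\Inter_t(y_t^\star)$, however, requires $\max_y \Inter_t(y) \ge 1/3$, which fails in an initial warm-up regime where $\max_y \Inter_t(y)$ is still close to $2^{-m}$. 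A separate balanced-split update, which partitions $\{0,1\}^m$ into two halves of roughly equal current mass and transfers mass between them, must be used until $\max_y \Inter_t(y)$ reaches a constant; this warm-up takes $O(m/v^2)$ samples and contributes an analogous $\Omega(v)$ drift in the complementary potential $\log(2^m \max_y \Inter_t(y))$. Splicing the warm-up and concentration regimes into a single sub-martingale with $\Omega(v)$ drift throughout, and handling the discontinuity caused by $\arg\max \Inter_t$ potentially swapping between steps, is the delicate part of the proof.
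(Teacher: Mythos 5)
Your high-level framework (simplex-valued state with martingale coordinates, argmax output, and the comparison of the argmax rule with the exactly-uniform randomized rule via $\E[\Inter_n(y)]=2^{-m}$) matches the paper, but the core technical step is where your proposal genuinely diverges, and there it has a real gap. You generalize Proposition~\ref{prop:expsmall} by tracking the single potential $\ln\bigl(1/\Dis_t\bigr)$ with $\Dis_t = 1-\max_y \Inter_t(y)$ and an update that pumps all mass toward the current argmax; as you yourself note, this update violates nonnegativity unless $\max_y\Inter_t(y)\geq 1/3$, and everything then hinges on the unproved ``warm-up'' phase. That warm-up, as described, does not work: a balanced split with proportional updates $\Up_t(y)=\pm\Inter_t(y)$ cannot in general satisfy $\sum_y\Up_t(y)=0$ exactly (the two halves never have exactly equal mass, so some coordinate needs a non-proportional correction), and the claimed $\Omega(v)$ drift of $\log\bigl(2^m\max_y\Inter_t(y)\bigr)$ fails in configurations where the maximum is isolated --- e.g.\ one coordinate at $1/4$ and all others exponentially small, which lies in your warm-up regime. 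There the maximal coordinate sits essentially alone in its half, evolves multiplicatively by $(1\pm\psi(F_t)/2)$, and by Jensen (or the Taylor bound $\ln(1+x)\leq x-x^2/6$ with $\E_d[\psi(F)]=0$) its log-mass has \emph{negative} expected drift of order $-v$, so your potential decreases in expectation. The discontinuity when the argmax changes is a further unhandled issue. So the ``delicate part'' you defer is not a routine splicing argument; it is the heart of the proof, and the specific mechanism you propose for it breaks.

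The paper avoids this entirely by never analyzing the drift of the maximal coordinate. Its direction vector $\Up_t$ alternates signs along the sorted coordinates, which guarantees $\abs{\up_t[j]}\leq\inter_{t-1}[j]$ for every coordinate at every time (Claim~\ref{cl:prob}), so there is no warm-up regime, and $\abs{\up_t[j]}=\inter_{t-1}[j]$ exactly for every coordinate except possibly the current largest. Consequently each non-maximal coordinate's potential $\ln\bigl(1/\Inter_t[j]\bigr)$ gains at least $v/24$ in expectation per step; an averaging argument shows at most one index $j_0$ can escape this drift for more than half the steps, and Azuma applied per coordinate plus a union bound over the $2^m$ coordinates (Claim~\ref{cl:single-large}) shows all coordinates but one drop below $\eps/2^{m+1}$, after which the comparison with the randomized extractor finishes the proof. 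If you want to salvage your route, the most direct fix is to adopt a direction vector with this ``every non-maximal coordinate is updated fully proportionally'' property and to switch from the single potential $1-\max_y\Inter_t(y)$ to per-coordinate potentials $\ln\bigl(1/\Inter_t[j]\bigr)$, which also disposes of the argmax-swapping issue.
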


\paragraph{Extraction procedure.}
Let $M=2^m$. We describe how to output an almost uniform distribution over $[M]$ given ($NK^+$) GSV sources.
Define random variables $\Inter_0, \ldots, \Inter_n$ over $\R^{M}$ by $\Inter_0=\frac{1}{M}\cdot \bold{1}\in\R^{M}$ be the vector all of whose coordinates are $1/M$ and 
\[\Inter_{t}=\Inter_{t-1} +   \frac{\psi(F_t)}{2}\cdot\Up_{t},\]
where $F_t$ ($1 \leq t \leq n$) is the $t$-th output of the GSV~source sequence and $\Up_{t}\in\R^{M}$ is a vector defined below.  The extractor outputs the index of the largest coordinate in $\Inter_n$.

Let $\sort_{i-1}:[M]\rightarrow [M]$ be the function which on input $j\in[M]$ outputs the index of the
$j$-th smallest coordinate in $\inter_{i-1}$ (so that $\inter_{i-1}[\sort_{i-1}(1)] \leq \inter_{i-1}[\sort_{i-1}(2)]\leq \cdots \leq \inter_{i-1}[\sort_{i-1}(M)] $); Then $\up_i$ is given by 
\[\up_i[\sort_{i-1}(j)] = (-1)^{j}\cdot \inter_{i-1}[\sort_{i-1}(j)], \qquad \text{ if } 1\leq j\leq M-1,\]
and $\up_i[\sort_{i-1}(M)] = - \sum_{j=1}^{M-1} \up_i[\sort_{i-1}(j)]$.

 As an example, consider $m=1$.  For each $\inter_t$, if $\inter_{t-1}[1]\leq\inter_{t-1}[2]$, then $\up_t[1]=-\inter_{t-1}[1]$ and $\up_t[2]=\inter_{t-1}[1]$,  otherwise $\up_t[2]=-\inter_{t-1}[2]$ and $\up_t[1]=\inter_{t-1}[2]$. The extractor outputs  $2$ if $\inter_{n}[1]\leq\inter_{n}[2]$ otherwise output $1$. This procedure simulates the extractor in Proposition~\ref{prop:expsmall} by considering variable $Z_t=\Inter_t[2]-\Inter_t[1]$ instead and observing that $Z_t = Z_{t-1} + \frac{\psi(F)}{2}\cdot (1-|Z_t|)$.

\paragraph{Analysis.} Our proof relies on following two claims. The first claim says the values of $\inter_t$ are never negative and add up to one, so at each time they represent a probability distribution over the $M$ possibilities. The second claims says the total mass in this probability distribution gets concentrated in only one of the $M$ possibilities after reading enough symbols from the GSV source.
\begin{claim}\label{cl:prob}
For any $0\leq t\leq n$, $\inter_t[1] +\cdots + \inter_t[M]=1$ and $\inter_t[j]>0$ for any $j\in[M]$
\end{claim}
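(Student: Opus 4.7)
The plan is to prove both assertions by induction on $t$. The base case $t=0$ is immediate since $\Inter_0 = (1/M)\cdot\mathbf{1}$ has strictly positive entries that sum to $1$. For the inductive step, I assume the claim for $\inter_{t-1}$ and analyze the effect of the update $\Inter_t = \Inter_{t-1} + (\psi(F_t)/2)\cdot\Up_t$.

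The sum-preservation part is easy: the definition $\up_t[\sort_{t-1}(M)] = -\sum_{j=1}^{M-1}\up_t[\sort_{t-1}(j)]$ forces $\sum_j \up_t[j] = 0$ by construction, so $\sum_j \inter_t[j] = \sum_j \inter_{t-1}[j] = 1$.

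For positivity, write $v_j = \inter_{t-1}[\sort_{t-1}(j)]$, so by the inductive hypothesis and the sorting $0 < v_1 \leq v_2 \leq \cdots \leq v_M$. For each non-top index $j \in \{1,\dots,M-1\}$, the update is purely multiplicative,
\[ \inter_t[\sort_{t-1}(j)] \;=\; v_j\cdot\Big(1 + \tfrac{\psi(F_t)}{2}(-1)^j\Big), \]
and since $\psi(F_t)/2 \in [-1/2, 1/2]$ the scalar factor lies in $[1/2,3/2]$, which keeps this entry strictly positive.

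The actual work lies in handling the top coordinate, where the update reads
\[ \inter_t[\sort_{t-1}(M)] \;=\; v_M + \tfrac{\psi(F_t)}{2}\cdot S, \qquad S := \sum_{j=1}^{M-1}(-1)^{j+1}v_j. \]
I would prove $|S|\leq v_{M-1}$ by a parity-dependent grouping of the telescoping sum: pairing adjacent terms as $(v_{2k-1}-v_{2k})$ produces non-positive contributions (by sortedness), while pairing them as $(v_{2k+1}-v_{2k})$ produces non-negative ones; applying these two groupings in opposite directions sandwiches $S$ between $-v_{M-1}$ and $v_{M-1}$, the boundary term differing only slightly according to whether $M$ is even or odd. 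Combined with $v_{M-1}\leq v_M$ and $|\psi(F_t)/2|\leq 1/2$, this gives $\inter_t[\sort_{t-1}(M)] \geq v_M - v_M/2 > 0$, completing the induction. The main obstacle is this alternating-sum bound $|S|\leq v_{M-1}$: morally it says the alternating sum of a sorted nonnegative sequence is dominated by its largest entry, but a clean justification needs the parity case split above.
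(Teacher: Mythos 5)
Your proposal is correct and follows essentially the same route as the paper: induction on $t$, sum preservation from $\sum_j \up_t[j]=0$, and positivity reduced to the alternating-sum bound $\bigl|\sum_{j=1}^{M-1}(-1)^{j+1}v_j\bigr|\leq v_{M-1}\leq v_M$, which the paper also proves by exactly the two pairings of adjacent sorted terms you describe. The only cosmetic difference is your parity case split on $M$, which is unnecessary here since $M=2^m$ is always even.
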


\begin{claim}\label{cl:single-large} For a sufficiently large constant $C$, and $n\geq C\cdot (m+\log(1/\eps))/v^2$, with probability at least $1-\eps/2$, $\Inter_n[\Sort_n(M)]\geq 1-\eps/2$. 
\end{claim}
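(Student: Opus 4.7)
The plan is to introduce the scalar potential $P_t = \sum_{j \in [M]} \sqrt{\Inter_t[j]}$ and to track its contraction toward the minimum value $1$. By Claim~\ref{cl:prob}, each $\Inter_t$ is a probability distribution on $[M]$, so $P_t \geq 1$ with equality only at a point mass, while $P_0 = \sqrt{M}$. The first step is a purely deterministic observation: whenever $P_t - 1 \leq \epsilon'$ with $\epsilon' < 1/8$, one has $\Inter_t[\Sort_t(M)] \geq 1 - \epsilon'$. To see this, write $P_t - 1 = \sum_j \sqrt{a_j}(1 - \sqrt{a_j})$ with $a_j = \Inter_t[j]$, and partition $[M]$ into $S_0 = \{j : \sqrt{a_j} \leq 1/2\}$ and $S_1 = [M] \setminus S_0$. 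Each $j \in S_0$ contributes at least $\sqrt{a_j}/2$ to $P_t - 1$, giving $\sum_{j \in S_0} a_j \leq (1/2)\sum_{j \in S_0} \sqrt{a_j} \leq \epsilon'$. For $j \in S_1$, the quadratic $s(1-s) \leq \epsilon'$ with $s > 1/2$ forces $a_j \geq 1 - 4\epsilon'$; combined with $\sum_j a_j = 1$, this rules out both $|S_1| \geq 2$ and $|S_1| = 0$, leaving a single heavy index carrying mass $\geq 1 - \sum_{j \in S_0} a_j \geq 1 - \epsilon'$, which must be $\Sort_t(M)$.

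The main step is the contraction estimate $\E[P_t - 1 \mid F_1,\dots,F_{t-1}] \leq (1 - v/40)(P_{t-1} - 1)$. Setting $a_j = \Inter_{t-1}[\Sort_{t-1}(j)]$ and using the update $\Inter_t[\Sort_{t-1}(j)] = a_j(1 + (-1)^j \psi(F_t)/2)$ for $j < M$, I apply the calculus inequality $\sqrt{1+y} \leq 1 + y/2 - y^2/10$, valid on $[-1/2,1/2]$. Together with $\E[\psi(F_t) \mid F_1,\dots,F_{t-1}] = 0$ and $\E[\psi(F_t)^2 \mid F_1,\dots,F_{t-1}] \geq v$ from $\mathrm{NK}^+$, this yields $\E[\sqrt{1 + (-1)^j \psi(F_t)/2} \mid F_1,\dots,F_{t-1}] \leq 1 - v/40$ for every $j < M$. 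For the remaining coordinate $j = M$, the update $\Inter_t[\Sort_{t-1}(M)] = a_M + (\psi(F_t)/2)\Up_t[\Sort_{t-1}(M)]$ has conditional mean $a_M$ because the perturbation is linear in $\psi(F_t)$, so Jensen's inequality applied to the concave function $\sqrt{\cdot}$ gives $\E[\sqrt{\Inter_t[\Sort_{t-1}(M)]} \mid F_1,\dots,F_{t-1}] \leq \sqrt{a_M}$. Summing over $j$ and using $\sqrt{a_M} \leq 1$ delivers the contraction.

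The remaining bookkeeping is routine. Iterating, $\E[P_n - 1] \leq (1 - v/40)^n(\sqrt{M} - 1) \leq e^{-vn/40}\cdot 2^{m/2}$. Choosing $n \geq C(m + \log(1/\eps))/v^2$ with $C$ sufficiently large (using $1/v \leq 1/v^2$ since $v \leq 1$) drives this below $\eps^2/4$, so Markov's inequality bounds $\Pr[P_n - 1 > \eps/2] \leq \eps/2$, and the first step then upgrades this into $\Pr[\Inter_n[\Sort_n(M)] < 1 - \eps/2] \leq \eps/2$.

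The main obstacle is choosing a potential whose contraction rate does not degrade with $M$. More natural candidates such as $1 - \max_j \Inter_t[j]$ or $1 - \sum_j \Inter_t[j]^2$ suffer from a $\Theta(1/M)$ slowdown once the distribution is nearly concentrated on one index, which would force a $\mathrm{poly}(M)$ sample complexity rather than the desired $O(\log M)$. The square-root potential trades multiplicative accuracy for sensitivity to tail mass spread across many light coordinates, yielding a contraction rate that depends only on $v$ and not on the output length.
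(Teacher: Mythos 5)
Your proof is correct, but it takes a genuinely different route from the paper. The paper works coordinate-wise: it sets $\mathbf{X}_t[j]=\ln(1/\Inter_t[j])$, recenters each sequence into a martingale $\mathbf{Y}_t[j]$ with increments bounded by $\ln 3$, applies Azuma's inequality plus a union bound over all $2^m$ coordinates, and uses an averaging argument showing that at most one index $j_0$ can be the current maximum (and thus escape the multiplicative update) more than $n/2$ times, so every other coordinate drifts down at rate $v/24$ per step. You instead collapse everything into the single scalar potential $P_t-1=\sum_j\sqrt{\Inter_t[j]}-1$, prove a one-step multiplicative contraction $\E[P_t-1\mid F_1,\dots,F_{t-1}]\le(1-v/40)(P_{t-1}-1)$ (the max coordinate is handled by Jensen, since its update is conditionally mean-preserving, and each of the other $M-1$ coordinates contracts because $\E[\sqrt{1\pm\psi(F_t)/2}]\le 1-v/40$), and finish with Markov plus your elementary lemma converting $P_n-1\le\eps/2$ into mass $\ge 1-\eps/2$ on $\Sort_n(M)$. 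I checked the key steps: the calculus inequality $\sqrt{1+y}\le 1+y/2-y^2/10$ does hold on $[-1/2,1/2]$ (only barely at $y=1/2$; a safer constant like $y^2/16$ would be more comfortable), the bound $|\Up_t[\Sort_{t-1}(M)]|\le\Inter_{t-1}[\Sort_{t-1}(M)]$ needed for nonnegativity is exactly what Claim~\ref{cl:prob} establishes, and the contraction follows since $\sqrt{a_M}\le 1$. What your approach buys: no Azuma, no union bound over exponentially many coordinates, no averaging argument about $j_0$, and in fact a better variance dependence ($n=O((m+\log(1/\eps))/v)$ suffices, which you then relax to $1/v^2$ to match the statement); what it gives up is only the exponentially small failure probability the paper's tail bound provides, which the claim does not require anyway. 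Two cosmetic caveats: your deterministic lemma needs $\eps/2<1/8$, so you should either assume $\eps$ small or note that larger $\eps$ follows by monotonicity after adjusting $C$; and you should say explicitly that, conditioned on $F_1,\dots,F_{t-1}$, both the sorted order and the die are fixed, which is what licenses pulling $\sqrt{a_j}$ and the signs out of the conditional expectation.
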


We first assume those two claims and prove Proposition~\ref{prop:main2}.
\begin{proof}[Proof of Proposition~\ref{prop:main2}]
For any $j\in [M]$, the sequence of random variables $\Inter_{0}[j],\dots, \Inter_{n}[j]$ forms a martingale because $\E_{d}[\psi(F)]=0$ for any $d\in \mathcal D$. So that $\Inter_n[j]=\Inter_0[j]=1/M$.  By Claim~\ref{cl:prob}, $\inter_n$ is a probability distribution. Consider a modification $\mathrm{Ext}'$ of our extractor which outputs $j$ with probability $\Inter_{n}[j]$ for any $j\in[M]$.  Then the output of $\mathrm{Ext}'$ is uniformly distributed over $[M]$, because for any $j\in[M]$, $\mathrm{Ext}'$ outputs $j$ with probability $\E[\Inter_{n}[j]]=1/M$. 

Conditioned on $\Inter_n$, the output of our extractor is different from $\mathrm{Ext}'$ with probability at most $1-\Inter_n[\Sort_n(M)]$. Thus the error of our extractor is at most $1- \E\big[\Inter_n[\Sort_n(M)]\big]$.
By Claim~\ref{cl:single-large} and $\Inter_n\geq 0$,  for $n=O((m+\log(1/\eps^2)/v^2))$,  $\E\big[\Inter_n[\Sort_n(M)]\geq (1-\eps/2)^2\geq 1-\eps$. Proposition~\ref{prop:main2} follows.
\end{proof}
Now we prove Claim~\ref{cl:prob} and~\ref{cl:single-large}.
\begin{proof}[Proof of Claim~\ref{cl:prob}]
We prove by induction on $t$. The base case holds by the definition of $\inter_0$. Moreover, assuming that coordinates of $\inter_{t-1}$ sum to one, we find that $ \inter_t[1] +\cdots + \inter_t[M]=1$ due to $\sum_{j=1}^{M}\up_{t}[j]=0$.  Also, $\inter_{t}>0$ follows from $\inter_{t-1}>0$, $|\psi|\leq 1$ and  that $|\up_t[j]|\leq \inter_{t-1}[j]$ for all $j\in[M]$. The latter inequality, when written in the form $|\up_t[\sort_{t-1}(j)]|\leq \inter_{t-1}[\sort_{t-1}(j)]$ is easy to verify: It is immediate for $1\leq j\leq M-1$ and for $j=M$ we have 
$$\big|\up_t[\sort_{i-1}(M)]\big|   = \bigg|\sum_{j=1}^{M-1}(-1)^{j}\cdot\inter_{t-1}[\sort_{t-1}(j)]\bigg|  
\leq \inter_{t-1}[\sort_{t-1}(M-1)]\leq \inter_{t-1}[\sort_{t-1}(M)],$$ 
where the middle inequality comes from the fact that $\inter_{t-1}[\sort_{t-1}(j)]$'s have been sorted and $\inter_{t-1}> 0$ so that we can rewrite $\sum_{j=1}^{M-1}(-1)^{j+1}\cdot\inter_{i-1}[\sort_{i-1}(j)]$ and show  $0<\sum_{j=1}^{M-1}(-1)^{j+1}\cdot\inter_{i-1}[\sort_{i-1}(j)]\leq \inter_{t-1}[\sort_{t-1}(M-1)]$ as follows.
\[\inter_{t-1}[\sort_{t-1}(1)] + \sum_{j=2}^{M/2}\big(\inter_{t-1}[\sort_{t-1}(2j-1)]-\inter_{t-1}[\sort_{t-1}(2j-2)]\big)>0
,\]
\[\inter_{t-1}[\sort_{t-1}(M-1)] -  \sum_{j=2}^{M/2}\big(\inter_{t-1}[\sort_{t-1}(2j-2)]-\inter_{t-1}[\sort_{t-1}(2j-3)]\big)\leq \inter_{t-1}[\sort_{t-1}(M-1)].\]
\end{proof}

\begin{proof}[Proof of Claim~\ref{cl:single-large}]
To prove $\Pr[\Inter_n[\Sort_n(M)]\geq 1-\eps/2]\geq 1-\eps/2$, it is sufficient to show,
\begin{equation}
\label{eq:goal}
\Pr\Big[\exists j_0 \in [M], \text{ s.t. } \forall j\neq j_0, \Inter_{n}[j]\leq \frac{\eps}{2M}\Big] \geq 1-\eps/2.
\end{equation}
To show this, for any $j\in[M]$,  we define the sequence $\mathbf{X}_0[j],\dots, \mathbf{X}_n[j]$ where 
$\mathbf{X}_t[j]=\ln{(1/\Inter_{t}[j])},$ 
and  the sequence $\mathbf{Y}_0[j],\dots, \mathbf{Y}_n[j]$ where 
$\mathbf{Y}_0[j]= 0$ and for $t\geq 1$
$\mathbf{Y}_t[j] = \mathbf{Y}_{t-1}[j] + \mathbf{X}_t[j] - \E\big[\mathbf{X}_t[j]\, \big| \,\mathbf{X}_{t-1}[j],\dots, \mathbf{X}_0[j]\big].$  

Observe that the sequence of $\mathbf{Y}_0[j],\dots,\mathbf{Y}_n[j]$ forms a martingale. In addition, for any $j\in[M]$ and $1\leq t\leq n$, $|\mathbf{Y}_t[j]-\mathbf{Y}_{t-1}[j]|\leq\ln3$, because 
 \[\mathbf{Y}_t[j]-\mathbf{Y}_{t-1}[j]=\mathbf{X}_t[j] - \mathbf{X}_{t-1}[j]+ \E\big[\mathbf{X}_{t-1}[j]-\mathbf{X}_t[j]\, \big| \,\mathbf{X}_{t-1}[j],\dots, \mathbf{X}_0[j]\big]\]
and $\mathbf{X}_t[j] - \mathbf{X}_{t-1}[j]=\ln \Big(1+ \frac{\Up_{t}[j]}{\Inter_{t-1}[j]}\cdot \frac{\psi(F_t)}{2}\Big)\in [\ln\frac{1}{2},\ln\frac{3}{2}]$ ($|\Up_{t}[j]|\leq |\Inter_{t-1}[j]|$ has been established). 

By Azuma's inequality, it holds that for any $k$, 
$ \Pr[ |\mathbf{Y}_n[j]| >k ] \leq e^{-k^2/2n(\ln 3)^2}.$ By union bound, it holds that 
\begin{align}\label{eq:mart}
\Pr[\forall j, |\mathbf{Y}_n[j]|\leq k ]\geq 1- M\cdot e^{-k^2/2n(\ln 3)^2}.
\end{align} 
We claim that there exists $j_0\in[M]$ such that for any $j\neq j_0$,
\begin{equation}\label{eq:acu}
\mathbf{X}_n[j] - \mathbf{Y}_n[j] \geq \frac{nv}{48}.
\end{equation}
Then (\ref{eq:mart}) and $(\ref{eq:acu})$ imply 
\[\Pr[\exists j_0\in[M], \text{ s.t. }\forall j\neq j_0, \mathbf{X}_n[j]\geq \frac{nv}{48}-t] \geq 1- M\cdot e^{-k^2/2n(\ln 3)^2}.\]
Plugging  in $k=nv/96$, we obtain (\ref{eq:goal}) for $n\geq C (m+ \log(1/\eps))$ where $C\geq \max(96, 192(\ln3)^2/v^2) \cdot  \ln2$. It remains to prove (\ref{eq:acu}). By expanding the recursive relation of $\mathbf{Y}_t$, for any $t\geq 1$, we have
\begin{equation}\label{eq:dxy}
\mathbf{Y}_t[j] - \mathbf{X}_t[j] = \sum_{k=0}^{t-1} \Big(\mathbf{X}_k[j] - \E\big[\mathbf{X}_{k+1}[j]\,\big|\, \mathbf{X}_0[j],\dots, \mathbf{X}_k[j]\big]\Big).
\end{equation}
Intuitively, $\mathbf{Y}_t[j]- \mathbf{X}_t[j]$ accumulates the shifts of $\mathbf{X}_0,\dots, \mathbf{X}_{t-1}$ from being a martingale. Now for any $k$ we compute  
\begin{align*}
\mathbf{X}_k[j] - \E\big[\mathbf{X}_{k+1}[j] \,\big|\,\mathbf{X}_0[j],\dots, \mathbf{X}_k[j]\big]  
& = \E\bigg[\ln \frac{\Inter_{k+1}[j]}{\Inter_{k}[j]}\,\bigg |\, \mathbf{X}_0[j],\dots, \mathbf{X}_k[j]\bigg] \\
& = \E\bigg[\ln \Big(1+ \frac{\Up_{k+1}[j]}{\Inter_{k}[j]}\cdot \frac{\psi(F_k)}{2}\Big)\, \bigg|\, \mathbf{X}_0[j],\dots, \mathbf{X}_k[j]\bigg]\\
& \leq - \frac{1}{6}\E\Big[\Big(\frac{\Up_{k+1}[j]}{\Inter_{k}[j]}\cdot \frac{\psi(F_k)}{2}\Big)^2 \,\Big |\, \mathbf{X}_0[j],\dots, \mathbf{X}_k[j]\Big],
\end{align*}
where the second equation is by $\Inter_{k+1}[j]=\Inter_{k}[j]+\Up_{k+1}[j]\cdot \frac{\psi(C_k)}{2}$ and the third equation is because, by Taylor expansion, $\ln(1+x)=\sum_{\ell=1}^{\infty} \frac{(-1)^{\ell+1}}{\ell !} x^{\ell}$ and $\ln(1+x)\leq -\frac{x^2}{6}$ for $x=\frac{\Up_{k+1}[j]}{\Inter_{k}[j]}\cdot \frac{\psi(C_k)}{2}\in(-1/2,1/2)$ (note that we have already established $|\Up_{k+1}[j]|\leq |\Inter_{k}[j]|$). Furthermore, for any $k$ and $j\neq \sort_k[M]$, because $|\up_{k+1}[j]|=|\inter_{k}[j]|$, we find that   
\begin{equation}
\label{eq:shift}
\mathbf{X}_k[j] - \E\big[\mathbf{X}_{k+1}[j] \,\big|\,\mathbf{X}_0[j],\dots, \mathbf{X}_k[j]\big] \leq - \frac{1}{6}\E\Big[\Big( \frac{\psi(F_k)}{2}\Big)^2 \,\Big |\, \mathbf{X}_0[j],\dots, \mathbf{X}_k[j]\Big]  \leq -\frac{v}{24},
\end{equation}
Because for every $k$, there exists at most a single $j$ such that $|\up_{k+1}[j]|\neq|\inter_{k}[j]|$. By averaging argument, there exists at most a single $j_0$, such that  $|\up_{k+1}[j]|\neq|\inter_{k}[j]|$ happens at least $n/2$ times for $k=0,\dots,n$. For other $j\neq j_0$, $|\up_{k+1}[j]|=|\inter_{k}[j]|$ happens at least $n/2$ times.  Thus by (\ref{eq:shift}),  (\ref{eq:dxy}), for any $j\neq j_0$, 
\[\mathbf{Y}_n[j]-\mathbf{X}_n[j]\leq\sum_{k:|\up_{k+1}[j]|=|\inter_{k}[j]|} (-\frac{v}{24}) + \sum_{k:|\up_{k+1}[j]|\neq |\inter_{k}[j]|} 0 \leq  -\frac{v}{24}\cdot\frac{n}{2}=-\frac{nv}{48}.\] 
The desired conclusion follows.
\end{proof}

\paragraph{A more efficient implementation.}

A straightforward implementation of the extractor in Proposition~\ref{prop:main2} is simultaneous manipulation of $2^m$ martingales, which requires time $nm2^m$. To extract $\omega(\log{n})$ number of bits, the extractor will run in super-polynomial time which is inefficient.

We give a more efficient implementation of our extractor which orchestrates all the $2^m$ martingale updates simultaneously in time $\min\{nm2^m, n^{O(|\mathcal F|)}\}$. When the number of dice $|\mathcal F|$ is a constant, the extractor runs in polynomial time even for extracting linear number of bits.

For every $0\leq t\leq n$, we use two lists $(\mathrm{L_t}, \mathrm{LL_t})$ to keep track of martingales in $\Inter_{t}$. $\mathrm{LL_t}$ keeps track of martingales who have been one of the largest martingales in last $t$ steps. And $\mathrm{L_t}$ keeps necessary information for other martingales. 
Every element in $\mathrm{LL_t}$ is a pair $(v,t_v)$ which represents a martingale with value $v$ whose first time being the largest martingale is in $\Inter_{t_v}$. Every element in $\mathrm{L_t}$ is a pair of $(v, c(v))$ where $v$ is a value and  $c(v)$ is the number of martingales in $\Inter_t$ (but not in $\mathrm{LL_t}$) with value $v$. In particular, we define $\mathrm{LL_0}=\emptyset$ and $\mathrm{L_0}=\{(1/2^m,2^m)\}$. 

The main observation is that every martingale in $\mathrm{L_t}$ has been updated in a multiplicative way where the multiplicative factor comes from a fixed set  $\{1\pm\psi(f)/2: f\in\mathcal F\}$. Thus the size $\mathrm{L_t}$ is at most $\binom{t+2|\mathcal F|}{2|\mathcal F|}=n^{O(|\mathcal F|)}$. Moreover, there are at most $t$ martingales in $\mathrm{LL_t}$ and at most $2^m$ martaingales in total. Therefore we will operate on at most $\min(2^m, n^{O(|\mathcal F|)}+n)$ objects in every step. In particular,  sorting values among $\mathrm{LL_{t-1}}$ and $\mathrm{L_{t-1}}$ runs in $\min(m2^m, n^{O(|\mathcal F|)})$ and after that, the order of $j$th martingale with value $v$ in $\mathrm{LL_{t-1}}$ or a martingale in $\mathrm{L_{t-1}}$ can be obtained in time  $\min(2^m, n^{O(|\mathcal F|)})$.

Given $\mathrm{LL_{t-1}},\mathrm{L_{t-1}}$, it is sufficient to continue the update rule to obtain $\mathrm{LL_{t}},\mathrm{L_{t}}$.
We go over $(v,c(v))\in \mathrm{L_{t-1}}$ in order (increasing in $v$) and we derive at most two groups of martingales with values $v_1=v\cdot(1+\psi(f)/2)$ and $v_2=v\cdot(1-\psi(f)/2)$. Moreover, knowing the order in $Z_{t-1}$ allows us to know the size of each group and to update  $(v_1,c(v_1))$ and $(v_2,c(v_2))$ in $\mathrm{L_{t}}$ accordingly. Updating values in $\mathrm{LL_{t-1}}$ is straightforward. In the end, if the largest value $v$ is in $\mathrm{L_t}$ (if several martingales have the same value, take the one with highest order in step $t-1$), then we move the martingale from $\mathrm{LL_t}$ to $\mathrm{L_t}$ by adding $(v,t)$ into $\mathrm{LL_t}$ and updating $(v,c(v))$ to $(v,c(v)-1)$ in $\mathrm{L_t}$.
 
From $\mathrm{LL_n}$, we obtain the largest martingale in $\Inter_n$. In order to track back its identity, for $t=0$, we define the order of all martingales by their indexes and for every $1\leq t\leq n$, we define the order martingales with the same value in $\mathrm{LL_t}$ by their orders in $\Inter_{t-1}$.  We prove by strong induction that given the order of a martingale in $\Inter_{t}$ for $0\leq t\leq n$, we can track back its order in $\Inter_{0}$ which is its identify.

The base case is true for $t=0$. Suppose it holds for $t'\leq t-1$.
Given a martingale in $\mathrm{LL_{t}}$ with value $v$,  if $t_v\leq t-1$, we can apply induction hypothesis to track back the largest one in $\Inter_{t_v}$. If $t_v = t$, then we run our updating procedures on  $\mathrm{LL_{t-1}}$ to identify the last martingale in $\mathrm{L_{t}}$ whose value becomes $v$.  Similarly, for any $j$, given the $j$th martingale with value $v$ in $\mathrm{L_{t}}$, we run the updating procedures on $\mathrm{L_{t-1}}$ to identify the $j$th martingale whose value becomes $v$ in $\mathrm{L_{t}}$. Given $\mathrm{L_{t-1}}$ and $\mathrm{LL_{t-1}}$, knowing its order among martingales with the same value is sufficient to identify its order among all martingales in $\Inter_{t-1}$. So we can apply induction hypothesis to track back its identity in $Z_0$.

\subsection{A lower bound on the quality of extraction}
\label{sec:polyexample}

The implication $2 \rightarrow 1$ in Theorem~\ref{thm:main2} follows readily from Propositions~\ref{prop:mvd} and \ref{prop:lowerquality} below.  These refer to an analytic condition that characterizes randomness-efficient extractability called the mean-variance divergence (MVD) condition, which can be viewed as the suitable analogue of the MVR condition in Section~\ref{sec:general}.

The {\em kernel} of GSV source $(\mathcal F, \mathcal D)$, denoted by $\Ker \mathcal{D}$, is the set of all $\psi\colon \mathcal{F} \to \R$ such that $\E_d[\psi_d(F)] = 0$ for all dice $d \in \mathcal{D}$.

\begin{proposition}
\label{prop:nkequiv}
A GSV source $(\mathcal F, \mathcal D)$ satisfies ($\mathit{NK}^+$) if and only if for every die $d \in \mathcal{D}$ there exists a function $\psi_d \in \Ker \mathcal{D}$ that is not constant on the support of $d$.
\end{proposition}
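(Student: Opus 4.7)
The forward direction is immediate: if $\psi\colon \mathcal F \to [-1,1]$ is a single witness for ($\mathit{NK}^+$), then $\psi \in \Ker \mathcal D$ and $\Var_d[\psi(F)] > 0$ for every $d$, which means $\psi$ is not constant on the support of any die, so we may take $\psi_d = \psi$ for every $d$.

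For the reverse direction, the plan is to combine the witnesses $\psi_d$ into a single function in $\Ker \mathcal D$ that is simultaneously non-constant on every die's support, then rescale. The space $\Ker \mathcal D \subseteq \R^{\mathcal F}$ is a linear subspace. For each die $d$, let
\[
V_d \;=\; \{\psi \in \Ker \mathcal D : \psi \text{ is constant on the support of } d\},
\]
which is a linear subspace of $\Ker \mathcal D$. By hypothesis $\psi_d \in \Ker \mathcal D \setminus V_d$, so each $V_d$ is a \emph{proper} subspace of $\Ker \mathcal D$.

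The key step is to observe that a real vector space cannot be written as a finite union of proper linear subspaces. Hence there exists $\psi \in \Ker \mathcal D \setminus \bigcup_{d \in \mathcal D} V_d$. A concrete way to exhibit such a $\psi$ is to take a generic linear combination $\psi = \sum_{d \in \mathcal D} c_d \psi_d$: the condition $\psi \in V_{d_0}$ for a fixed die $d_0$ carves out a proper affine-linear subvariety of the coefficient space $\R^{\mathcal D}$ (it is nonempty and proper since setting $c_{d_0} = 1$ and the other $c_d = 0$ violates the membership in $V_{d_0}$), so the coefficients avoiding every such subvariety form a nonempty Zariski-open set. By construction, $\E_d[\psi(F)] = 0$ for every $d$ and $\psi$ is non-constant on the support of every $d$, so $\Var_d[\psi(F)] > 0$.

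Finally, since $\mathcal F$ is finite, $\psi$ is bounded, and replacing $\psi$ by $\psi / \max_{f \in \mathcal F}\abs{\psi(f)}$ yields a function from $\mathcal F$ to $[-1,1]$ which still satisfies $\E_d[\psi(F)] = 0$ and $\Var_d[\psi(F)] > 0$ for every $d \in \mathcal D$, witnessing ($\mathit{NK}^+$). There is no real obstacle here; the only mildly delicate point is remembering that the argument ``a finite union of proper subspaces is proper'' needs the underlying field to be infinite, which is the case over $\R$.
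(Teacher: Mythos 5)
Your proof is correct and takes essentially the same route as the paper: both establish the reverse direction by taking a generic linear combination $\sum_d c_d \psi_d$ of the per-die witnesses, the paper phrasing the genericity probabilistically (random coefficients from a finite set of size more than $\abs{\mathcal D}$ plus a union bound) while you phrase it via the fact that a real vector space is not a finite union of proper subspaces. This is only a cosmetic difference; your explicit rescaling into $[-1,1]$ at the end is a detail the paper leaves implicit.
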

\begin{proof}
The forward direction follows by setting all $\psi_d$ to equal the witness $\psi$ for the ($\mathit{NK}^+$) condition.

For the reverse direction, let $\psi = \sum_{d \in \mathcal{D}} N_d \psi_d$ where $N_d$ are independent random variables, each uniformly distributed over some finite set $\mathcal{N} \subseteq \R$ of size more than $\abs{D}$.  By linearity, $\psi$ is in $\Ker \mathcal{D}$.  Moreover, for each die $d$ and each possible choice of the values $N_{d'}$ for $d' \neq d$, the sum $\sum N_d \psi_d$ can be constant on the support of $d$ for at most one choice of $N_d$ (for if two such choices existed then $\psi_d$ itself must be constant on the support of $d$).  Therefore, $\psi$ is constant on $d$ with probability at most $1/\abs{\mathcal{N}}$.  Since $\abs{\mathcal{N}} > \abs{\mathcal{D}}$, the existence of an ($\mathit{NK}^+$) witness $\psi$ follows from the union bound.
\end{proof}

\begin{claim}
\label{claim:nkdual}
If ($\mathit{NK}^+$) fails for GSV source $(\mathcal F, \mathcal D)$ then there exists a die $d \in \mathcal{D}$ such that for every pair of faces $f^*, f_*$ in the support of $\mathcal{D}$ there exists a function $\beta\colon \mathcal{D} \to \R$ such that for all functions $\psi\colon \mathcal{F} \to \R$,
\begin{equation}
\label{eq:nkpdual}
\psi(f^*) - \psi(f_*) = \sum_{d' \in \mathcal{D}} \beta(d') \cdot \E_{d'}[\psi(F)].
\end{equation}
\end{claim}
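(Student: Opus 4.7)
The plan is to invoke Proposition~\ref{prop:nkequiv} to obtain a ``bad'' die $d$, and then extract the identity~\eqref{eq:nkpdual} from standard linear-algebra duality between the kernel and the image of the transpose of an explicit linear map.

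First I would restate the failure of ($\mathit{NK}^+$) in linear-algebraic terms.  By the contrapositive of Proposition~\ref{prop:nkequiv}, if ($\mathit{NK}^+$) fails then there exists a die $d \in \mathcal{D}$ such that every $\psi \in \Ker \mathcal{D}$ is constant on the support of $d$.  Fix such a $d$, and fix an arbitrary pair of faces $f^*, f_*$ in its support (this is the intended reading of the pair of faces referenced in the claim).  Then the linear functional
\[
  \ell\colon \R^{\mathcal{F}} \to \R, \qquad \ell(\psi) = \psi(f^*) - \psi(f_*),
\]
vanishes identically on the subspace $\Ker \mathcal{D} \subseteq \R^{\mathcal{F}}$.

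Next I would express $\Ker \mathcal{D}$ as the kernel of an explicit linear map and apply duality.  Consider
\[
  L\colon \R^{\mathcal{F}} \to \R^{\mathcal{D}}, \qquad L(\psi)_{d'} = \E_{d'}[\psi(F)] \text{ for each } d' \in \mathcal{D}.
\]
By definition, $\Ker \mathcal{D} = \Ker L$.  A basic fact from linear algebra is that a linear functional on $\R^{\mathcal{F}}$ vanishes on $\Ker L$ if and only if it lies in the row span of $L$ (equivalently, in the image of the transpose map $L^{\!\top}$).  Applied to $\ell$, this yields coefficients $\beta(d') \in \R$ such that for every $\psi\colon \mathcal{F} \to \R$,
\[
  \psi(f^*) - \psi(f_*) = \ell(\psi) = \sum_{d' \in \mathcal{D}} \beta(d') \cdot L(\psi)_{d'} = \sum_{d' \in \mathcal{D}} \beta(d') \cdot \E_{d'}[\psi(F)],
\]
which is exactly \eqref{eq:nkpdual}.

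Since the coefficients $\beta$ are produced pair by pair, the same die $d$ works uniformly for every pair $f^*, f_*$ in its support, with only $\beta$ depending on the pair.  The entire argument is essentially a one-line invocation of the kernel-image duality; the only ingredient that requires actual thought is recognizing that Proposition~\ref{prop:nkequiv} converts the analytic failure of ($\mathit{NK}^+$) into the subspace-containment statement ``$\ell$ vanishes on $\Ker L$,'' after which duality does the remaining work.  I do not anticipate a serious obstacle; the main care needed is to work with linear functionals on $\R^{\mathcal{F}}$ (rather than inner products against a probability vector) so that no positivity assumption on $\beta$ is implicitly invoked.
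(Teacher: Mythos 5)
Your proposal is correct and is essentially the paper's own argument: both invoke Proposition~\ref{prop:nkequiv} to get a die $d$ with $\Ker\mathcal{D}$ contained in the space of functions constant on the support of $d$, and then apply the standard duality that a linear functional vanishing on the kernel of the map $\psi \mapsto (\E_{d'}[\psi(F)])_{d'}$ lies in the span of the dice's probability mass functions. The paper merely phrases this via orthogonal complements ($\mathcal{C}_d^\perp \subseteq (\Ker\mathcal{D})^\perp$) and specializes $\phi = \mathbf{1}_{f^*} - \mathbf{1}_{f_*}$, which is the same functional $\ell$ you use.
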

\begin{proof}
If $f^* = f_*$ the conclusion holds with $\beta = 0$.  Otherwise, let $\mathcal{C}_d$ denote the linear space of functions that are constant on the support of die $d$. By Proposition~\ref{prop:nkequiv}, if ($\mathrm{NK}^+$) fails then there exists a die $d$ for which all functions $\psi \in \Ker \mathcal{D}$ also belong to $\mathcal{C}_d$, i.e., $\Ker \mathcal{D} \subseteq \mathcal{C}_d$.  Then $\mathcal{C}_d^\perp \subseteq (\Ker \mathcal{D})^\perp$, where $\perp$ indicates the dual subspace.  The space $(\Ker \mathcal{D})^\perp$ is the span of the probability mass functions $\mathrm{pmf}_d$ of all the dice.  Therefore every $\phi \in \mathcal{C}_d^\perp$ can be written as a linear combination
\[ \phi = \sum_{d \in \mathcal{D}} \beta(d) \cdot \mathrm{pmf}_d. \]
Then for every $\psi\colon \mathcal{F} \to \R$,
\[ \sum_{f \in \mathcal{F}} \phi(f) \cdot \psi(f) = \sum_{d \in \mathcal{D}, f \in \mathcal{F}} \beta(d) \cdot \mathrm{pmf}_d(f) \cdot \psi(f) = \sum_{d \in \mathcal{D}} \beta(d) \cdot \E_d[\psi(F)]. \]
The claim follows by specializing $\phi$ to the function that takes value $1$ on $f^*$, $-1$ on $f_*$, and 0 elsewhere.  This function is dual to $\mathcal{C}_d$.
\end{proof}

The $\mathrm{MVD}(\eps, \delta)$ (mean-variance divergence) condition postulates that there exists a function $\psi\colon \mathcal{F} \to [-1, 1]$ such that for every die $d \in \mathcal{D}$, 
\begin{align}\label{eq:E-V-epsilon-delta}
\big|\E_d[\psi(F)]\big| < \epsilon(\Var_d[\psi(F)] - \delta). \tag{MVD}
\end{align}

\begin{proposition}
\label{prop:mvd}
If GSV source $(\mathcal F, \mathcal D)$ fails ($\mathit{NK}^+$) then there exists a constant $C$ such that for every $\eps > 0$, $(\mathcal F, \mathcal D)$ fails $\mathit{MVD}(\eps, C \eps^2)$.
\end{proposition}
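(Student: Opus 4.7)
The plan is to prove the contrapositive by contradiction, leveraging Claim~\ref{claim:nkdual} to convert any hypothetical $\mathrm{MVD}(\eps, C\eps^2)$ witness into an impossible bound on the spread of $\psi$ over the support of a specially chosen die.

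First I would invoke Claim~\ref{claim:nkdual}: assuming $(\mathit{NK}^+)$ fails, it produces a die $d \in \mathcal{D}$ such that for every pair of faces $(f^*, f_*) \in \mathcal{F} \times \mathcal{F}$ (we may take both from $\mathcal{F}$, since the standing assumption puts every face in the support of some die) there are weights $\beta_{f^*,f_*}\colon \mathcal{D} \to \R$ with
\[
\psi(f^*) - \psi(f_*) \;=\; \sum_{d' \in \mathcal{D}} \beta_{f^*,f_*}(d') \cdot \E_{d'}[\psi(F)] \qquad \text{for every } \psi\colon \mathcal{F} \to \R.
\]
I would then set $B := \max_{f^*, f_*} \sum_{d'} |\beta_{f^*,f_*}(d')|$, a finite constant depending only on $(\mathcal F, \mathcal D)$ (a maximum over finitely many pairs of faces, one choice of $\beta$ per pair), and claim that $C := B^2/4$ works.

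Next, suppose toward contradiction that $\mathrm{MVD}(\eps, C\eps^2)$ holds for some $\eps > 0$, witnessed by some $\psi\colon \mathcal{F} \to [-1,1]$. Applying the defining strict inequality of $\mathrm{MVD}$ to each $d' \in \mathcal{D}$, and noting that the right-hand side $\eps(\Var_{d'}[\psi(F)] - C\eps^2)$ must be positive (the left-hand side being nonnegative), forces two consequences: $\Var_{d'}[\psi(F)] > C\eps^2$, and
\[
|\E_{d'}[\psi(F)]| \;<\; \eps\, \Var_{d'}[\psi(F)] \;\leq\; \eps,
\]
using $\Var_{d'}[\psi(F)] \leq 1$ since $\psi$ is bounded in $[-1,1]$.

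Finally, I would derive the contradiction on the distinguished die $d$. Pick $f^*, f_*$ in the support of $d$ maximizing and minimizing $\psi$ on that support. The elementary bound $\Var[Y] \leq ((b - a)/2)^2$ for $Y$ valued in $[a, b]$, applied to $\psi(F)$ under $d$, gives $\Var_d[\psi(F)] \leq (\psi(f^*) - \psi(f_*))^2 / 4$; combined with $\Var_d[\psi(F)] > C\eps^2 = B^2 \eps^2/4$ this yields $\psi(f^*) - \psi(f_*) > B\eps$. On the other hand, the dual expression together with the per-die bound above gives
\[
\psi(f^*) - \psi(f_*) \;\leq\; \sum_{d'} |\beta_{f^*,f_*}(d')| \cdot |\E_{d'}[\psi(F)]| \;<\; B\eps,
\]
a direct contradiction. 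The main subtlety I anticipate is pinning down $C$ as a universal constant independent of $\eps$: this rests on $B$ being a finite algebraic invariant of the source, extracted from finitely many dual witnesses. The quadratic scaling $C = B^2/4$ is then forced by the square-of-spread nature of the variance bound, which is precisely what couples the $\eps^2$ slack in MVD to the linear ``mean-only'' bound inherited from the dual relation of Claim~\ref{claim:nkdual}.
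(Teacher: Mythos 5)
Your proof is correct and follows essentially the same route as the paper's: invoke Claim~\ref{claim:nkdual}, apply the dual relation to the faces maximizing and minimizing $\psi$ on the support of the distinguished die $d$, lower-bound that spread via the variance forced by the MVD hypothesis, and upper-bound it by the weighted sum of the per-die means, with $C$ a fixed square of the dual weights. The only cosmetic differences are your use of the sharper Popoviciu bound (giving $C = B^2/4$ where the paper takes $C = B^2$) and that the constant should be defined as a maximum over pairs of faces in the support of $d$ only --- the pairs for which the proof of Claim~\ref{claim:nkdual} actually supplies $\beta$ --- which is all your final step uses anyway.
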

\begin{proof}
Assume $(\mathcal F, \mathcal D)$ fails ($\mathit{NK}^+$).  Let $d$ be the die stipulated by Claim~\ref{claim:nkdual} and $C$ be the maximum of $(\sum_{d' \in \mathcal{D}} \abs{\beta(d')})^2$ over all pairs of faces $f^*, f_*$ in the support of $d$.

Towards a contradiction suppose that $(\mathcal F, \mathcal D)$ satisfies $\mathit{MVD}(\eps, \delta)$.  Then the witness $\psi\colon \mathcal{F} \to [-1, 1]$ for $\mathit{MVD}(\eps, \delta)$ must satisfy the conditions $\Var_d[\psi(F)] > \delta$ and $\abs{\E_{d'}[\psi(F)]} < \eps \Var_{d'}[\psi(F)]$ for all dice $d' \in \mathcal{D}$.  Let $f^*$ and $f_*$ be faces in the support of $d$ that maximize and minimize the value of $\psi$, respectively.  By Claim~\ref{claim:nkdual}, relation \eqref{eq:nkpdual} holds for some $\beta$ that may depend on $f^*$ and $f_*$ but not on $\psi$.  Then
\begin{multline*}
\sqrt{\delta} 
  < \sqrt{\Var_d[\psi(F)]} 
  \leq \psi(f^*) - \psi(f_*) 
  = \sum_{d' \in \mathcal{D}} \beta(d') \cdot \E_{d'}[\psi(F)] \\
  \leq \sum_{d' \in \mathcal{D}} \abs{\beta(d')} \cdot \abs{\E_{d'}[\psi(F)]} 
  < \sum_{d' \in \mathcal{D}} \abs{\beta(d')} \cdot \eps \Var_{d'}[\psi(F)] 
  \leq \sqrt{C} \eps, 
\end{multline*}
where the last inequality follows from the definition of $C$ and the boundedness of $\psi$.  Therefore $\mathit{MVD}(\eps, \delta)$ fails for $\delta = C \eps^2$.
\end{proof}

\begin{proposition}
\label{prop:lowerquality}
Assume that $(\mathcal F, \mathcal D)$ fails $\mathit{MVD}(\eps, \delta)$.  Then every extractor with error $\eps/20$ for $(\mathcal F, \mathcal D)$ requires $1/8\delta$ samples, assuming $\eps > 0$ is sufficiently small. 
\end{proposition}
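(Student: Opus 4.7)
The plan is to adapt the inductive argument of Claim~\ref{claim:induction} by introducing a penalty term that accumulates over $n$ steps, reflecting the weaker hypothesis $\overline{\mathrm{MVD}(\eps,\delta)}$, and then to derive a contradiction for any extractor that uses too few samples.

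First I would prove the following refinement of Claim~\ref{claim:induction}: assuming $\mathrm{MVD}(\eps,\delta)$ fails, for every $n$, every extractor $\mathrm{Ext}\colon \mathcal F^n \to \{0,1\}$, and every $0 \leq \alpha \leq 1$ such that $\E_{A_-}[\mathrm{Ext}] \geq \alpha$ for every strategy $A_-$, there is a strategy $A_+$ with
\[
\E_{A_+}[\mathrm{Ext}] \;\geq\; \alpha + \frac{\eps}{1+\eps}\,\alpha(1-\alpha) \;-\; n \cdot \frac{\eps\delta}{1+\eps}.
\]
The induction on $n$ parallels the original proof. In the inductive step, the same minimality argument applied to $d_-$ gives $\E_d[\alpha(F)] \geq \alpha$ for every die, so $\E_d[\alpha(F)-\alpha] \geq 0$. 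Applying $\overline{\mathrm{MVD}(\eps,\delta)}$ to $\psi(f) = \alpha(f)-\alpha$ produces a die $d_+$ with $|\E_{d_+}[\alpha(F)-\alpha]| \geq \eps(\Var_{d_+}[\alpha(F)]-\delta)$. Because the left-hand side is non-negative, this implies $\E_{d_+}[\alpha(F)-\alpha] \geq \eps(\Var_{d_+}[\alpha(F)]-\delta)$ regardless of whether $\Var_{d_+}[\alpha(F)]$ exceeds $\delta$: when it does, the absolute value can be dropped; when it does not, the right-hand side is non-positive while the left-hand side is non-negative, so the inequality is automatic. Substituting this bound together with the inductive hypothesis for length-$(n-1)$ extractors into the algebraic computation leading to \eqref{eq:indcalc}, the only new contribution compared with the original proof is a single $-\eps\delta/(1+\eps)$ term per step, exactly absorbed by the extra $-n\cdot\eps\delta/(1+\eps)$ penalty in the refined claim.

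To derive the proposition, suppose toward contradiction that $\mathrm{Ext}\colon \mathcal F^n \to \{0,1\}$ achieves error at most $\eps/20$ with $n < 1/(8\delta)$. Then $\E_{A_-}[\mathrm{Ext}] \geq 1/2 - \eps/20$ for every $A_-$. Applying the refined claim with $\alpha = 1/2 - \eps/20$ and using $\alpha(1-\alpha) = 1/4 - \eps^2/400$, for $\eps$ sufficiently small a direct calculation shows
\[
\alpha + \frac{\eps}{1+\eps}\,\alpha(1-\alpha) - n\cdot\frac{\eps\delta}{1+\eps} \;\geq\; \frac{1}{2} + \frac{3\eps}{20} - O(\eps^2) - \frac{\eps}{8(1+\eps)} \;>\; \frac{1}{2} + \frac{\eps}{20},
\]
contradicting the error bound on $\mathrm{Ext}$.

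The main obstacle is conceptual: $\overline{\mathrm{MVD}(\eps,\delta)}$ is formally weaker than $\overline{\mathrm{MVR}(\eps)}$, as its guarantee can be vacuously fulfilled by a die of variance at most $\delta$. The resolution is the non-negativity $\E_{d_+}[\alpha(F)-\alpha] \geq 0$ handed to us by the minimality of $d_-$, which automatically covers the degenerate small-variance case and makes the bound $\E_{d_+}[\alpha(F)-\alpha] \geq \eps(\Var_{d_+}[\alpha(F)]-\delta)$ hold unconditionally. With that in hand, the rest of the proof reuses the algebra of Claim~\ref{claim:induction} verbatim, and careful book-keeping of constants converts the per-step slack $\eps\delta$ into the claimed $1/(8\delta)$ sample-complexity lower bound.
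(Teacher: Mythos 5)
Your proposal is correct and follows essentially the same route as the paper: the paper's (omitted) proof of its Claim~\ref{claim:ext-induction} is exactly your penalized induction, with the same per-step loss $\eps\delta/(1+\eps)$ absorbed via the bound $\E_{d_+}[\alpha(F)-\alpha]\geq\eps(\Var_{d_+}[\alpha(F)]-\delta)$, which you correctly justify from the non-negativity of $\E_{d_+}[\alpha(F)-\alpha]$ supplied by the minimality of $d_-$. One small arithmetic slip: in your final display the coefficient $3\eps/20$ should be $\eps/5$ (from $-\eps/20+\eps/4$), since as written $3/20-1/8=1/40<1/20$; with the corrected constant the bound reads $1/2+3\eps/40-O(\eps^2)>1/2+\eps/20$, matching the paper's conclusion.
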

\begin{proof}
The proof is a direct extension of the proof of Proposition~\ref{prop:necessary}.  The main technical tool is the following claim:

\begin{claim}
\label{claim:ext-induction}
For every extractor $Ext\colon \mathcal{F}^n \to \{0, 1\}$, and every $0 \leq \alpha \leq 1$, if $\E_{A_-}[Ext] \geq \alpha$ for every strategy $A_-$, then there exists a strategy $A_+$ for which 
\[ \E_{A_+}[Ext] \geq \alpha + \frac{\eps}{1 + \eps} \cdot \bigl(\alpha (1 - \alpha) - \delta n\bigr). \]
\end{claim}

The proof of Claim~\ref{claim:ext-induction} is a notationally intensive direct extension of the proof of Claim~\ref{claim:induction}.  We omit the details.  

By Claim~\ref{claim:ext-induction} it follows that for every $\eps > 0$, if no strategy $A_-$ has error less than $\alpha = 1/2 - \eps/20$ against $Ext$ then there exists a strategy $A_+$ with advantage at least
\[ \E[Ext] \geq \frac12 - \frac{\eps}{20} + \frac{\eps}{1 + \eps} \cdot \Bigl(\frac{1 - \eps^2/400}{4} - \frac{1}{8}\Bigr), \]
which is at least $1/2 + \eps/20$ for sufficiently small $\eps$.
\end{proof}

\section{Open Questions}

In this work, we completely classify GSV sources in terms of their extractability.  We point out the following questions for further investigation:

\begin{itemize}
\item Is the sample complexity of $o(1/\eps^2)$ in part 2 of Theorem~\ref{thm:main2} tight?  Example E2 gives an upper bound of $O(1/\eps^7)$.  This non-$\mathrm{NK}^+$ extractable source satisfies $\mathrm{MVR}(\eps)$ with minimum variance $\eps^2$ for every $\eps$ (with witness $\psi = (\eps, -\eps, 1, -1)$), so $O(1/\eps^7)$ are sufficient for extraction error $\eps$ by Proposition~\ref{prop:sufficient}.

\item The number of required samples in Theorem~\ref{thm:main1} is of the form $\eps^{-O(2^{\abs{\mathcal{D}}})}$, where $\abs{\mathcal{D}}$ is the number of dice (see the proof of Proposition~\ref{prop:char}).  Is this exponential dependence in $\abs{\mathcal{D}}$ necessary?

\item  The multi-bit extractor in Theorem~\ref{thm:main2} runs in time $\min(nm2^{m},n^{O(\abs{\mathcal{F}})})$.  Can the dependence on the number of faces be improved, possibly by applying known seeded extraction algorithms?  

\item Proposition~\ref{prop:sufficient} states that sources satisfying condition $\mathrm{MVR}(\eps)$ admit extraction with error $O(\sqrt{\eps})$, while by Proposition~\ref{prop:necessary} extraction error $\Omega(\eps)$ is necessary.  Can this quadratic gap be narrowed?
\end{itemize}

\subsection*{Acknowledgments}
Part of this work was done while Omid Etesami and Siyao Guo were visiting the Chinese University of Hong Kong, and while Andrej Bogdanov and Siyao Guo were visiting the Simons Institute for the Theory of Computing at UC Berkeley.

\bibliographystyle{alpha}
\bibliography{refs}

\newcommand{\etalchar}[1]{$^{#1}$}
\begin{thebibliography}{DKM{\etalchar{+}}06}

\bibitem[BEG15]{BEG15}
Salman Beigi, Omid Etesami, and Amin Gohari.
\newblock Deterministic randomness extraction from generalized and distributed
  santha-vazirani sources.
\newblock In {\em International Colloquium on Automata, Languages, and
  Programming}, pages 143--154. Springer, 2015.

\bibitem[BEG17]{BEG17}
Salman Beigi, Omid Etesami, and Amin Gohari.
\newblock Deterministic randomness extraction from generalized and distributed
  santha-vazirani sources.
\newblock {\em SIAM Journal on Computing}, 46(1):1--36, 2017.

\bibitem[Bou05]{Bourgain05}
Jean Bourgain.
\newblock More on the sum-product phenomenon in prime fields and its
  applications.
\newblock {\em International Journal of Number Theory}, 1(01):1--32, 2005.

\bibitem[Bou07]{Bourgain07}
Jean Bourgain.
\newblock On the construction of affine extractors.
\newblock {\em GAFA Geometric And Functional Analysis}, 17(1):33--57, 2007.

\bibitem[CZ16]{CZ16}
Eshan Chattopadhyay and David Zuckerman.
\newblock Explicit two-source extractors and resilient functions.
\newblock In {\em Proceedings of the 48th Annual {ACM} {SIGACT} Symposium on
  Theory of Computing, {STOC} 2016, Cambridge, MA, USA, June 18-21, 2016},
  pages 670--683, 2016.

\bibitem[DGW09]{DGW09}
Zeev Dvir, Ariel Gabizon, and Avi Wigderson.
\newblock Extractors and rank extractors for polynomial sources.
\newblock {\em Computational Complexity}, 18(1):1--58, 2009.

\bibitem[DKM{\etalchar{+}}06]{DKMMN06}
Cynthia Dwork, Krishnaram Kenthapadi, Frank McSherry, Ilya Mironov, and Moni
  Naor.
\newblock Our data, ourselves: Privacy via distributed noise generation.
\newblock In {\em Advances in Cryptology - {EUROCRYPT} 2006, 25th Annual
  International Conference on the Theory and Applications of Cryptographic
  Techniques, St. Petersburg, Russia, May 28 - June 1, 2006, Proceedings},
  pages 486--503, 2006.

\bibitem[Dvi12]{Dvir12}
Zeev Dvir.
\newblock Extractors for varieties.
\newblock {\em Computational complexity}, 21(4):515--572, 2012.

\bibitem[Gab11]{Gabizon11}
Ariel Gabizon.
\newblock Deterministic extractors for affine sources over large fields.
\newblock In {\em Deterministic Extraction from Weak Random Sources}, pages
  33--53. Springer, 2011.

\bibitem[SV86]{SV86}
Miklos Santha and Umesh~V. Vazirani.
\newblock Generating quasi-random sequences from semi-random sources.
\newblock {\em J. Comput. Syst. Sci.}, 33(1):75--87, 1986.

\bibitem[Vad12]{Vadhan12}
Salil~P Vadhan.
\newblock {\em Pseudorandomness}, volume~56.
\newblock Now, 2012.

\end{thebibliography}
\end{document}